\DeclareSIUnit \bitspersecond {bps}
\newcommand{\centrallocation}{/Users/AnumAli/Dropbox/STUDY/Central} 
\newtheorem{thm}{Theorem}
\newcommand{\bbC}{{\mathbb{C}}}
\newcommand{\bbE}{{\mathbb{E}}}
\newcommand{\bbR}{{\mathbb{R}}}
\newcommand{\ba}{{\mathbf{a}}}
\newcommand{\bff}{{\mathbf{f}}}
\newcommand{\bg}{{\mathbf{g}}}
\newcommand{\bp}{{\mathbf{p}}}
\newcommand{\bx}{{\mathbf{x}}}
\newcommand{\by}{{\mathbf{y}}}
\newcommand{\bzero}{{\mathbf{0}}}
\newcommand{\bone}{{\mathbf{1}}}
\newcommand{\bA}{{\mathbf{A}}}
\newcommand{\bF}{{\mathbf{F}}}
\newcommand{\bH}{{\mathbf{H}}}
\newcommand{\bI}{{\mathbf{I}}}
\newcommand{\bR}{{\mathbf{R}}}
\newcommand{\bU}{{\mathbf{U}}}
\newcommand{\bV}{{\mathbf{V}}}
\newcommand{\bW}{{\mathbf{W}}}
\newcommand{\bX}{{\mathbf{X}}}
\newcommand{\rmc}{{\mathrm{c}}}
\newcommand{\rmm}{{\mathrm{m}}}
\newcommand{\rmn}{{\mathrm{n}}}
\newcommand{\rms}{{\mathrm{s}}}
\newcommand{\rmw}{{\mathrm{w}}}
\newcommand{\rmB}{{\mathrm{B}}}
\newcommand{\rmF}{{\mathrm{F}}}
\newcommand{\rmP}{{\mathrm{P}}}
\newcommand{\rmR}{{\mathrm{R}}}
\newcommand{\rmS}{{\mathrm{S}}}
\newcommand{\rmT}{{\mathrm{T}}}
\newcommand{\rmX}{{\mathrm{X}}}
\newcommand{\cC}{\mathcal{C}}
\newcommand{\cN}{\mathcal{N}}
\newcommand{\cS}{\mathcal{S}}
\newcommand{\sfs}{{\mathsf{s}}}
\newcommand{\sfy}{{\mathsf{y}}}
\newcommand{\bsfg}{\boldsymbol{\mathsf{g}}}
\newcommand{\bsfn}{\boldsymbol{\mathsf{n}}}
\newcommand{\bsfr}{\boldsymbol{\mathsf{r}}}
\newcommand{\bsfs}{\boldsymbol{\mathsf{s}}}
\newcommand{\bsfx}{\boldsymbol{\mathsf{x}}}
\newcommand{\bsfy}{\boldsymbol{\mathsf{y}}}
\newcommand{\bsfF}{\boldsymbol{\mathsf{F}}}
\newcommand{\bsfH}{\boldsymbol{\mathsf{H}}}
\newcommand{\bsfR}{\boldsymbol{\mathsf{R}}}
\newcommand{\bsfW}{\boldsymbol{\mathsf{W}}}
\newcommand{\balpha}{\boldsymbol{\alpha}}
\newcommand{\brho}{\boldsymbol{\rho}}
\newcommand{\bSigma}{\boldsymbol{\Sigma}}
\newcommand{\transp}{{\sf T}}
\newcommand{\compj}{{\rm j}}
\newcommand{\tr}{{\rm tr}}
\renewcommand{\vec}{{\rm vec}}
\newcommand{\diag}{{\rm diag}}
\def\munderbar#1{\underline{\sbox\tw@{$#1$}\dp\tw@\z@\box\tw@}}
\algnewcommand\INPUT{\item[\textbf{Input:}]}
\algnewcommand\OUTPUT{\item[\textbf{Output:}]}
\algnewcommand\INIT{\item[\textbf{Initialization:}]}
\algnewcommand\OFFLINE{\item[\textbf{Offline Calculation:}]}
\newcommand{\ul}{\underline}
\newcommand{\subsGHz}{sub-6 \GHz}
\newcommand{\SubsGHz}{Sub-6 \GHz}
\newcommand{\GHz}{\SI{}{\giga\hertz}}
\newcommand{\MHz}{\SI{}{\mega\hertz}}
\newcommand{\dBm}{\SI{}{\decibel}\rmm}
\newcommand{\ns}{\SI{}{\nano\second}}
\newcommand{\SNR}{{\rm SNR}}
\newcommand{\bHu}{\ul{\bH}}
\newcommand{\NRX}{N_{\rmR\rmX}}
\newcommand{\NTX}{N_{\rmT\rmX}}
\newcommand{\MRX}{M_{\rmR\rmX}}
\newcommand{\MTX}{M_{\rmT\rmX}}
\newcommand{\ulNRX}{\ul N_{\rmR\rmX}}
\newcommand{\ulNTX}{\ul N_{\rmT\rmX}}
\newcommand{\Ns}{N_{\rms}}
\newcommand{\ulNs}{\ul{N}_{\rms}}
\newcommand{\ulK}{\ul{K}}
\newcommand{\bFBB}{\bF_{\rmB\rmB}}
\newcommand{\bFRF}{\bF_{\rmR\rmF}}
\newcommand{\bWBB}{\bW_{\rmB\rmB}}
\newcommand{\bWRF}{\bW_{\rmR\rmF}}
\newcommand{\bWRFt}{\bW_{\rmR\rmF,t}}
\newcommand{\baTX}{\ba_{\rmT\rmX}}
\newcommand{\baRX}{\ba_{\rmR\rmX}}
\newcommand{\bATX}{\bA_{\rmT\rmX}}
\newcommand{\bARX}{\bA_{\rmR\rmX}}
\newcommand{\bbARX}{\bar{\bA}_{\rmR\rmX}}
\newcommand{\bsfRTX}{\bsfR_{\rmT\rmX}}
\newcommand{\bsfRRX}{\bsfR_{\rmR\rmX}}
\newcommand{\bsfRy}{\bsfR_{\by}}
\newcommand{\hbRyt}{\hat{\bR}_{\by,t}}
\newcommand{\bsfRg}{\bsfR_{\bsfg}}
\newcommand{\bURX}{\bU_{\rmR\rmX}}
\newcommand{\bURXs}{\bU_{\rmR\rmX,\rms}}
\newcommand{\bURXn}{\bU_{\rmR\rmX,\rmn}}
\newcommand{\bUTXs}{\bU_{\rmT\rmX,\rms}}
\newcommand{\hbURX}{\hat{\bU}_{\rmR\rmX}}
\newcommand{\hbURXs}{\hat{\bU}_{\rmR\rmX,\rms}}
\newcommand{\hbURXn}{\hat{\bU}_{\rmR\rmX,\rmn}}
\newcommand{\hbUTXs}{\hat{\bU}_{\rmT\rmX,\rms}}
\newcommand{\bSigmas}{\bSigma_\rms}
\newcommand{\bSigman}{\bSigma_\rmn}
\newcommand{\hbSigma}{\hat{\bSigma}}
\newcommand{\hbSigmas}{\hat{\bSigma}_\rms}
\newcommand{\hbSigman}{\hat{\bSigma}_\rmn}
\newcommand{\hbRbg}{\hat{\bR}_{\bar\bg}}
\newcommand{\hbRbgt}{\hat{\bR}_{\bar\bg,t}}
\newcommand{\delbsfR}{\Delta \bsfR}
\newcommand{\delbsfRRX}{\Delta \bsfR_{\rmR\rmX}}
\newcommand{\delbsfRTX}{\Delta \bsfR_{\rmT\rmX}}
\newcommand{\delURXs}{\Delta \bU_{\rmR\rmX,\rms}}
\newcommand{\delUTXs}{\Delta \bU_{\rmT\rmX,\rms}}
\newcommand{\ulDelta}{\ul\Delta}
\newcommand{\alpharc}{\alpha_{r_c}}
\newcommand{\Ts}{T_\rms}
\begin{document}
\bstctlcite{IEEEmax3beforeetal}
\title{Spatial Covariance Estimation for Millimeter Wave Hybrid Systems using \\Out-of-Band Information}
\author{Anum Ali, {\it Student Member, IEEE}, Nuria Gonz\'alez-Prelcic, {\it Member, IEEE}, and \\Robert W. Heath Jr., {\it Fellow, IEEE}
\thanks{This research was partially supported by the U.S. Department of Transportation through the Data-Supported Transportation Operations and Planning (D-STOP) Tier 1 University Transportation Center and by the Texas Department of Transportation under Project 0-6877 entitled ``Communications and Radar-Supported Transportation Operations and Planning (CAR-STOP)'' and by the National Science Foundation under Grant No. ECCS-1711702. N. Gonz\'alez-Prelcic was supported by the Spanish Government and the European Regional Development Fund (ERDF) under Project MYRADA (TEC2016-75103-C2-2-R).}
\thanks{A. Ali and R. W. Heath Jr. are with the Department of Electrical and Computer Engineering, The University of Texas at Austin, Austin, TX 78712-1687 \mbox{(e-mail: \{anumali,rheath\}@utexas.edu)}.}
\thanks{N. Gonz\'alez-Prelcic is with the Signal Theory and Communications Department, University of Vigo, Vigo, Spain \mbox{(e-mail: nuria@gts.uvigo.es)}.}
\thanks{A preliminary version of this work appeared in the Proceedings of Information Theory and Applications Workshop (ITA), February, 2016~\cite{Ali2016Estimating}.}
}
\maketitle
%
\begin{abstract}
In high mobility applications of millimeter wave (mmWave) communications, e.g., vehicle-to-everything communication and next-generation cellular communication, frequent link configuration can be a source of significant overhead. We use the \subsGHz~channel covariance as an out-of-band side information for mmWave link configuration. Assuming: (i) a fully digital architecture at sub-\SI{6}{\giga\hertz}; and (ii) a hybrid analog-digital architecture at mmWave, we propose an out-of-band covariance translation approach and an out-of-band aided compressed covariance estimation approach. For covariance translation, we estimate the parameters of \subsGHz~covariance and use them in theoretical expressions of covariance matrices to predict the mmWave covariance. For out-of-band aided covariance estimation, we use weighted sparse signal recovery to incorporate out-of-band information in compressed covariance estimation. The out-of-band covariance translation eliminates the in-band training completely, whereas out-of-band aided covariance estimation relies on in-band as well as out-of-band training. We also analyze the loss in the signal-to-noise ratio due to an imperfect estimate of the covariance. The simulation results show that the proposed covariance estimation strategies can reduce the training overhead compared to the in-band only covariance estimation.
\end{abstract}
\begin{IEEEkeywords}
Covariance estimation, out-of-band information, hybrid analog-digital precoding, millimeter-wave communications, subspace perturbation analysis
\end{IEEEkeywords}
\section{Introduction}
\IEEEPARstart{T}{he} hybrid precoders/combiners for millimeter wave (mmWave) MIMO systems are typically designed based on either instantaneous channel state information (CSI)~\cite{ElAyach2014Spatially} or statistical CSI~\cite{Alkhateeb2013Hybrid}. Obtaining channel information at mmWave is, however, challenging due to: (i) the large dimension of the arrays used at mmWave, (ii) the hardware constraints (e.g., limited number of RF chains~\cite{ElAyach2014Spatially,Alkhateeb2013Hybrid}, and/or low-resolution analog-to-digital converters (ADCs)~\cite{Mo2015Capacity}), and (iii) low pre-beamforming signal-to-noise ratio (\SNR). We exploit out-of-band information extracted from \subsGHz~channels to configure the mmWave links. The use of \subsGHz~information for mmWave is enticing as mmWave systems will likely be used in conjunction with \subsGHz~systems for multi-band communications and/or to provide wide area control signals~\cite{kishiyama2013future,daniels2007multi,Hashemi2018Out}.

Using out-of-band information can positively impact several applications of mmWave communications. In mmWave cellular~\cite{Pi2011introduction,Bai2015Coverage,2016Federal}, the base-station user-equipment separation can be large (e.g., on cell edges). In such scenarios, link configuration is challenging due to poor pre-beamforming \SNR~and user mobility. The pre-beamforming \SNR~is more favorable at \subsGHz~due to lower bandwidth. Therefore, reliable out-of-band information from \subsGHz~can be used to aid the mmWave link establishment. Similarly, frequent reconfiguration will be required in highly dynamic channels experienced in mmWave vehicular communications~(see e.g.,~\cite{choi2016millimeter,va2016impact} and the references therein). The out-of-band information (coming e.g., from dedicated short-range communication (DSRC) channels~\cite{li2010overview}) can play an important role in unlocking the potential of mmWave vehicular communications.
\subsection{Contributions}
The main contributions of this paper are as follows:
\begin{itemize}
\item We propose an out-of-band covariance translation strategy for MIMO systems. The proposed translation approach is based on a parametric estimation of the mean angle and angle spread (AS) of all clusters at \subsGHz. The estimated parameters are then used in the theoretical expressions of the spatial covariance to complete the translation. 

\item We formulate the problem of covariance estimation for hybrid MIMO systems as a compressed signal recovery problem. To incorporate out-of-band information in the proposed formulation, we introduce the concept of weighted compressed covariance estimation (similar to weighted sparse signal recovery~\cite{Scarlett2013Compressed}). The weights in the proposed approach are chosen based on the out-of-band information. 

\item We use tools from singular vector perturbation theory~\cite{Liu2008First} to quantify the loss in received post-processing \SNR~due to the use of imperfect covariance estimates. Specifically, considering a single path channel, we find an upper and lower bound on the loss in \SNR. The resulting expressions permit a simple and intuitive explanation of the loss in terms of the mismatch between the true and estimated covariance.
\end{itemize}
\subsection{Prior work}
We propose two mmWave covariance estimation strategies. The first strategy is covariance translation from \subsGHz~to mmWave, while the second strategy is out-of-band aided compressed covariance estimation. In this section, we review the prior work relevant to each approach.

Most of the prior work on covariance translation was tailored towards frequency division duplex (FDD) systems~\cite{Aste1998Downlink,Liang2001Downlink,Chalise2004Robust,Hugl1999Downlink,Jordan2009Conversion,Decurninge2015Channel,Miretti2018FDD}. The prior work includes least-squares based~\cite{Aste1998Downlink,Liang2001Downlink,Chalise2004Robust}, minimum variance distortionless response based~\cite{Hugl1999Downlink}, and~\cite{Miretti2018FDD} projection based strategies. In~\cite{Jordan2009Conversion}, a spatio-temporal covariance translation strategy was proposed based on two-dimensional interpolation. In~\cite{Decurninge2015Channel}, a training based covariance translation approach was presented. Unlike~\cite{Aste1998Downlink,Liang2001Downlink,Chalise2004Robust,Hugl1999Downlink,Jordan2009Conversion}, the translation approach in~\cite{Decurninge2015Channel} requires training specifically for translation but does not assume any knowledge of the array geometry. The uplink (UL) information has also been used in estimating the instantaneous downlink (DL) channel~\cite{Vasisht2016Eliminating,Shen2016Compressed}. In~\cite{Vasisht2016Eliminating}, the multi-paths in the UL channel were separated and subsequently used in the estimation of the DL channel. The UL measurements were used to weight the compressed sensing based DL channel estimation in~\cite{Shen2016Compressed}. 

In FDD systems, the number of antennas in the UL and DL array is typically the same, and simple correction for the differences in array response can translate the UL covariance to DL. MmWave systems, however, will use a larger number of antennas in comparison with \subsGHz,~and conventional translation strategies (as in \cite{Aste1998Downlink,Liang2001Downlink,Chalise2004Robust,Hugl1999Downlink,Jordan2009Conversion,Decurninge2015Channel,Miretti2018FDD,Vasisht2016Eliminating,Shen2016Compressed}) are not applicable. Further, the frequency separation between UL and DL is typically small (e.g., there is $9.82\%$ frequency separation between \SI{1935}{\MHz} UL and \SI{2125}{\MHz} DL~\cite{HuglSpatial}) and spatial information is congruent. We consider channels that can have frequency separation of several hundred percents, and hence some degree of spatial disagreement is expected.

To our knowledge, there is no prior work that uses the out-of-band information to aid the in-band mmWave covariance estimation.
Some other {out-of-band aided} mmWave communication methodologies, however, have been proposed. In~\cite{Raghavan2016Low}, coarse angle estimation at \subsGHz~followed by refinement at mmWave was proposed. In~\cite{Nitsche2015Steering}, the legacy WiFi measurements were used to configure the $60$~\GHz~WiFi links. The measurement results presented in~\cite{Nitsche2015Steering} demonstrated the benefits and practicality of using out-of-band information for mmWave communications. In~\cite{Ali2018Millimeter}, the \subsGHz~channel information was used to aid the beam-selection in analog mmWave systems. In~\cite{Hashemi2018Out}, a scheduling strategy for joint \subsGHz-mmWave communication system was introduced to maximize the delay-constrained throughput of the mmWave system. In~\cite{Gonzalez-Prelcic2016Radar}, radar aided mmWave communication was introduced. Specifically, the mmWave radar covariance was used directly to configure mmWave communication beams. 

The algorithms in~\cite{Raghavan2016Low,Ali2018Millimeter,Hashemi2018Out} were designed specifically for analog architectures. We consider a more general hybrid analog-digital architecture. Only LOS channels were considered in~\cite{Nitsche2015Steering}, whereas the methodologies proposed in this paper are applicable to NLOS channels. Radar information (coming from a band adjacent to the mmWave communication band) is used in~\cite{Gonzalez-Prelcic2016Radar}. We, however, use information from a \subsGHz~communication band as out-of-band information. 

In~\cite{Ali2016Estimating}, we provided preliminary results of covariance translation for SIMO narrowband systems and a single-cluster channel. In this paper, we extend~\cite{Ali2016Estimating} to consider multiple antennas at the transmitter (TX) and receiver (RX), multi-cluster frequency-selective wideband channels, and OFDM transmission, for both \subsGHz~and mmWave. Moreover, the preliminary results in~\cite{Ali2016Estimating} were obtained assuming a fully digital architecture at mmWave. We now consider a more practical hybrid analog-digital architecture. Finally,~\cite{Ali2016Estimating} did not include the out-of-band aided compressed covariance estimation strategy and the analytical bounds on the loss in received post-processing \SNR. 

The rest of the paper is organized as follows. In Section~\ref{sec:syschmodel}, we provide the system and channel models for \subsGHz~and mmWave. We present the out-of-band covariance translation in Section~\ref{sec:covtrans} and out-of-band aided compressed covariance estimation in Section~\ref{sec:comcovest}. In Section~\ref{sec:analysis}, we analyze the \SNR~degradation. We present the simulation results in Section~\ref{sec:simres}, and finally conclusions in Section~\ref{sec:conc}.

\textbf{Notation:} We use the following notation throughout the paper. Bold lowercase $\bx$ is used for column vectors, bold uppercase $\bX$ is used for matrices, non-bold letters $x$, $X$ are used for scalars. $[\bx]_i$, $[\bX]_{i,j}$, $[\bX]_{i,:}$, and $[\bX]_{:,j}$, denote $i$th entry of $\bx$, entry at the $i$th row and $j$th column of $\bX$, $i$th row of $\bX$, and $j$th column of $\bX$, respectively. We use serif font, e.g., $\bsfx$, for the frequency-domain variables. Superscript $\transp$, $\ast$ and $\dagger$ represent the transpose, conjugate transpose and pseudo inverse, respectively. $\bzero$ and $\bI$ denote the zero vector and identity matrix respectively. $\cC\cN(\bx,\bX)$ denotes a complex circularly symmetric Gaussian random vector with mean $\bx$ and covariance $\bX$. We use $\bbE[\cdot]$, $\|\!\cdot\!\|_p$, and $\|\!\cdot\!\|_\rmF$ to denote expectation, $p$ norm and Frobenius norm, respectively. The \subsGHz~variables are underlined, as $\ul{\bx}$, to distinguish them from mmWave.
\section{System, channel and, covariance models}\label{sec:syschmodel}
We consider a single-user multi-band MIMO system, shown in Fig.~\ref{fig:TXRX},where the \subsGHz~and mmWave systems operate simultaneously. We consider uniform linear arrays (ULAs) of isotropic point-sources at the TX and the RX. The strategies proposed in this work can be extended to other array geometries with suitable modifications. The \subsGHz~and mmWave arrays are co-located, aligned, and have comparable apertures.
\begin{figure}[h!]
\centering
\includegraphics[width=0.45\textwidth]{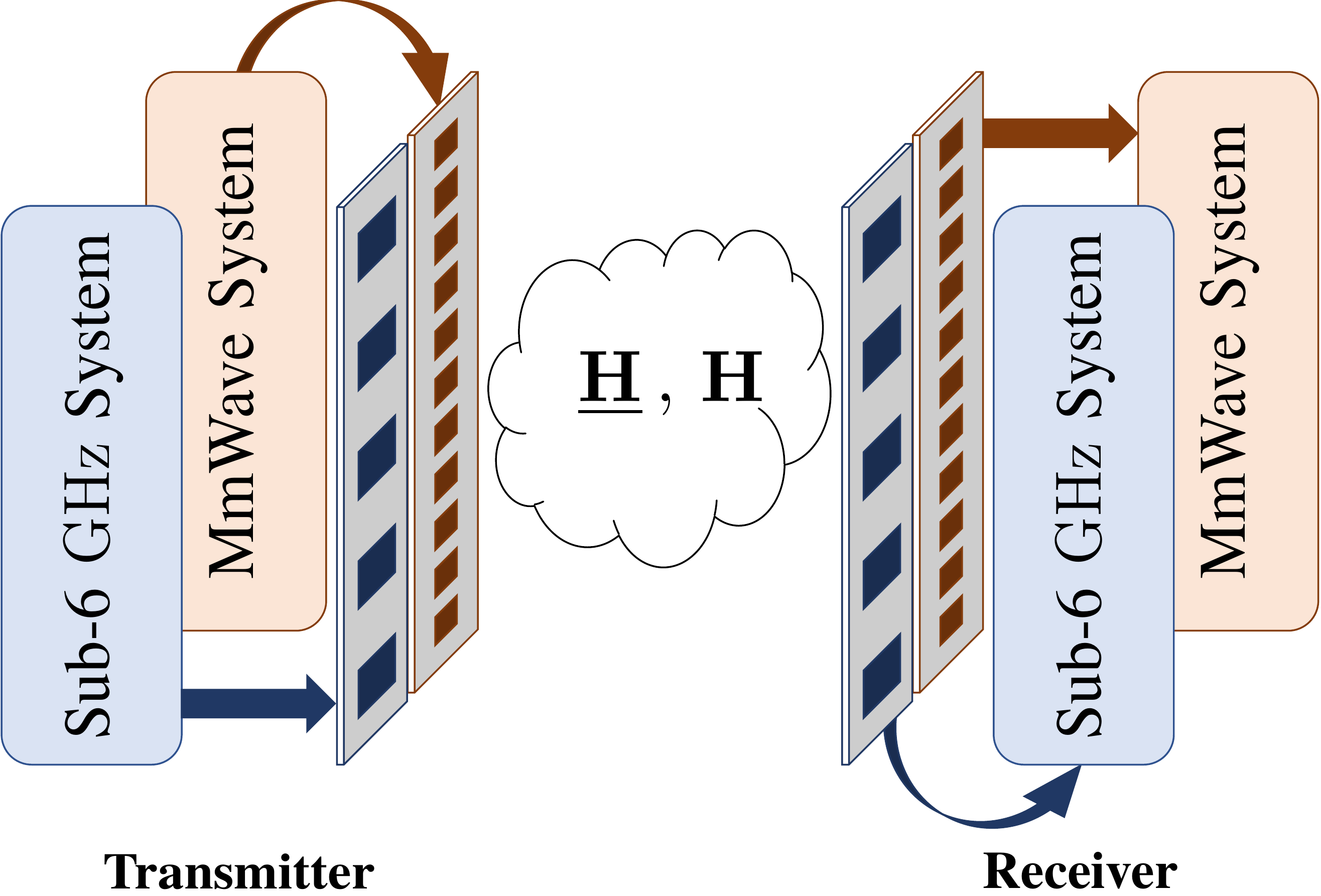}
\caption{The multi-band MIMO system with co-located \subsGHz~and mmWave antenna arrays. The \subsGHz~channel is denoted $\bHu$ and the mmWave channel is denoted $\bH$.}
\label{fig:TXRX}
\end{figure}
\subsection{Millimeter wave system model}
The mmWave system is shown in Fig.~\ref{fig:mmWavesystem}. The TX has $\NTX$ antennas and $\MTX\leq\NTX$ RF chains, whereas the RX has $\NRX$ antennas and $\MRX\leq\NRX$ RF chains. We assume that $\Ns\leq\min\{\MTX,\MRX\}$ data-streams are transmitted. We consider OFDM transmission with $K$ sub-carriers. The transmission symbols on sub-carrier $k$ are denoted as $\bsfs[k]\in\bbC^{\Ns\times1}$, and follow $\bbE[\bsfs[k]\bsfs^\ast[k]]=\frac{P}{K\Ns}\bI_{\Ns}$, where $P$ is the total average transmitted power. The data-symbols $\bsfs[k]$ are first precoded using the baseband-precoder $\bFBB[k]\in\bbC^{\MTX\times\Ns}$, then converted to time-domain using $\MTX$ $K$-point IDFTs. Cyclic-prefixes (CPs) are then prepended to the time-domain samples before applying the RF-precoder $\bFRF\in\bbC^{\NTX\times\MTX}$. Since the RF-precoder is implemented using analog phase-shifters, it has constant modulus entries i.e., $|[\bFRF]_{i,j}|^2=\frac{1}{\NTX}$. Further, we assume that the angles of the analog phase-shifters are quantized and have a finite set of possible values. With these assumptions, $[\bFRF]_{i,j}=\frac{1}{\sqrt{\NTX}}e^{\compj \zeta_{i,j}}$, where $\zeta_{i,j}$ is the quantized angle. The precoders satisfy the total power constraint $\sum_{k=1}^{K}\|\bFRF\bFBB[k]\|_\rmF^2=K\Ns$. 

We assume perfect time and frequency synchronization at the receiver. The received signals are first combined using the RF-combiner $\bWRF\in\bbC^{\NRX\times\MRX}$. The CPs are then removed and the time-domain samples are converted back to frequency-domain using $\MRX$ $K$-point DFTs. Subsequently, the frequency-domain signals are combined using the baseband combiner $\bWBB[k]\in\bbC^{\MRX\times\Ns}$. If $\bsfH[k]$ denotes the frequency-domain $\NRX\times\NTX$ mmWave MIMO channel on sub-carrier $k$, then the post-processing received signal on sub-carrier $K$ can be represented as 
\begin{align}
\bsfy[k]&=\bWBB^\ast[k]\bWRF^\ast\bsfH[k]\bFRF\bFBB[k]\bsfs[k]+\bWBB^\ast[k]\bWRF^\ast\bsfn[k],\nonumber\\
&=\bsfW^\ast[k]\bsfH[k]\bsfF[k]\bsfs[k]+\bsfW^\ast[k]\bsfn[k],
\label{eq:rxpost}
\end{align}
where $\bF[k]=\bFRF\bFBB[k]\in\bbC^{\NTX\times\Ns}$ is the precoder, and $\bW[k]=\bWRF\bWBB[k]\in\bbC^{\NRX\times\Ns}$ is the combiner. Finally, $\bsfn\sim\cC\cN(\bzero,\sigma_{\bsfn}^2\bI)$ is the additive white Gaussian noise. 

\begin{figure*}[h!]
\centering
\includegraphics[width=0.85\textwidth]{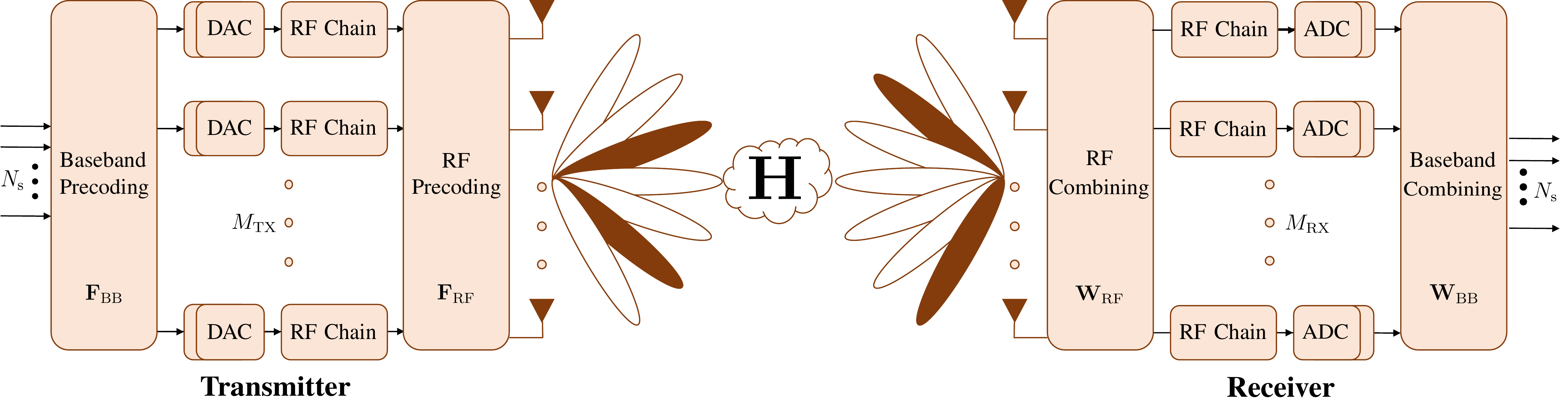}
\caption{The mmWave~system with hybrid analog/digital precoding.}
\label{fig:mmWavesystem}
\end{figure*}
\subsection{\SubsGHz~system model}
The \subsGHz~system is shown in Fig.~\ref{fig:sub6GHzsystem}. We underline all \subsGHz~variables to distinguish them from the mmWave variables. The \subsGHz~system has one RF chain per antenna and as such, fully digital precoding is possible. The $\ulNs$ data-streams are communicated by the TX with $\ulNTX$ antennas to the receiver with $\ulNRX$ antennas as shown in Fig.~\ref{fig:sub6GHzsystem}. The \subsGHz~OFDM system has $\ulK$ sub-carriers. 

\begin{figure*}[h!]
\centering
\includegraphics[width=0.6\textwidth]{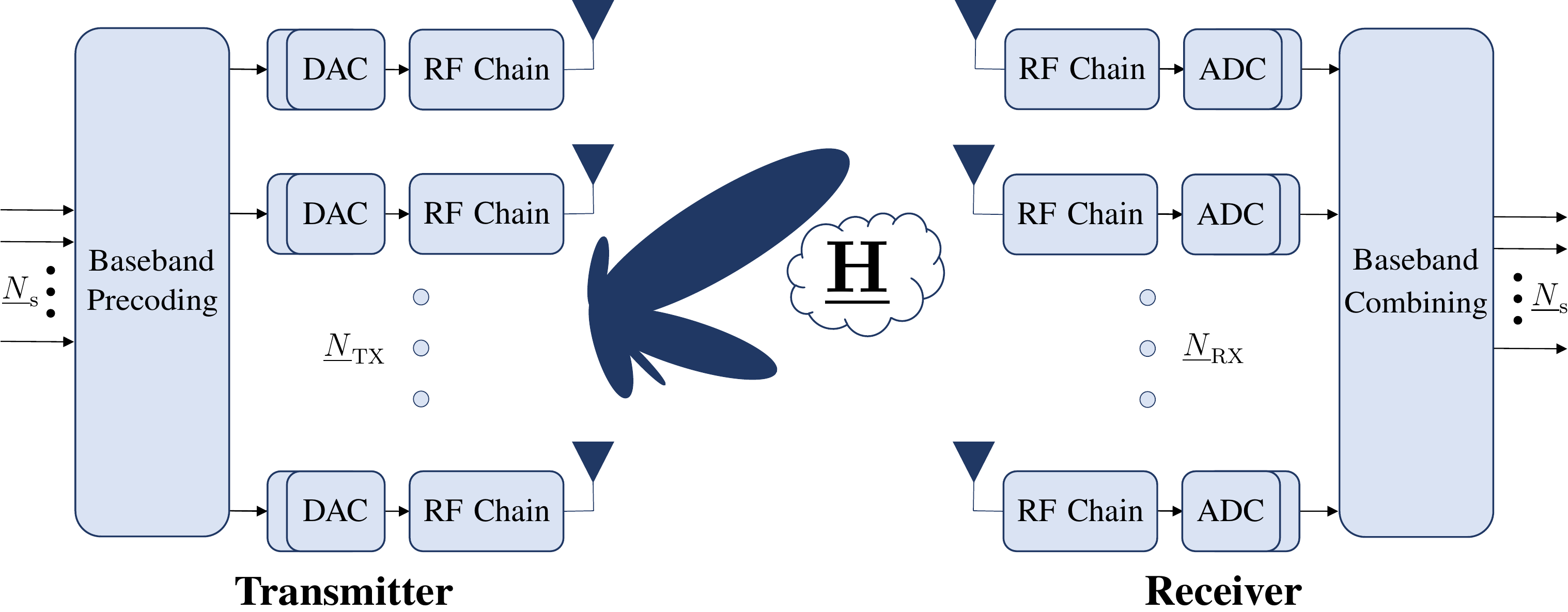}
\caption{The \subsGHz~system with digital precoding.}
\label{fig:sub6GHzsystem}
\end{figure*}
\subsection{Channel model}
We present the channel model for mmWave, i.e., using non-underlined notation. The \subsGHz~channel follows the same model. We adopt a wideband geometric channel model with $C$ clusters. Each cluster has a mean time-delay $\tau_c \in \bbR$, mean physical angle-of-arrival (AoA) and angle-of-departure (AoD) $\{\theta_c, \phi_c\} \in [0,2\pi)$. Each cluster is further assumed to contribute $R_c$ rays/paths between the TX and the RX. Each ray $r_c\in[R_c]$ has a relative time-delay $\tau_{r_c}$, relative angle shift $\{\vartheta_{r_c},\varphi_{r_c}\}$, and a complex path gain $\alpha_{r_c}$ (including path-loss). Further, $p(\tau)$ denotes the combined effects of analog filtering and pulse shaping filter evaluated at point $\tau$. Under this model, the delay-$d$ MIMO channel matrix $\bH[d]$ can be written as~\cite{Alkhateeb2016Frequency}
\begin{align}
\bH[d]=\sqrt{\NRX\NTX} &\sum_{c=1}^C \sum_{r_c=1}^{R_c} \alpharc p (d\Ts-\tau_c-\tau_{r_c})\times\nonumber\\
&\baRX(\theta_c+\vartheta_{r_c})\baTX^\ast(\phi_c+\varphi_{r_c}),
\label{eq:timedomch}
\end{align}
where $\Ts$ is the signaling interval and $\baRX(\theta)$ and $\baTX(\phi)$ are the antenna array response vectors of the RX and the TX, respectively. The array response vector of the RX is
\begin{align}
\baRX(\theta)=\frac{1}{\sqrt{\NRX}}[1,e^{\compj 2\pi\Delta\sin(\theta)},\cdots,e^{\compj(\NRX-1) 2\pi\Delta\sin(\theta)}]^\transp,
\end{align}
where $\Delta$ is the inter-element spacing normalized by the wavelength. The array response vector of the TX is defined in a similar manner. With the delay-$d$ MIMO channel matrix given in~\eqref{eq:timedomch}, the channel at sub-carrier $k$, $\bsfH[k]$ can be expressed as~\cite{Alkhateeb2016Frequency}
\begin{align}
\bsfH[k]=\sum_{d=0}^{D-1}\bH[d] e^{-\compj \tfrac{2\pi k}{K}d},
\label{eq:freqdomch}
\end{align}
where $D$ is the number of delay-taps in the mmWave channel.

\subsection{Covariance model}
We simplify~\eqref{eq:freqdomch} before discussing the channel covariance model. First, we plug in the definition of $\bH[d]$ from~\eqref{eq:timedomch} in~\eqref{eq:freqdomch}, change the order of summation, and re-arrange terms to write~\eqref{eq:freqdomch} as
\begin{align}
\bsfH[k]\!=\!\sqrt{\NRX\NTX} \sum_{c=1}^C \sum_{r_c=1}^{R_c}& \Big( \sum_{d=0}^{D-1}\alpharc p (d\Ts\!-\!\tau_c\!-\!\tau_{r_c})e^{-\compj \tfrac{2\pi k}{K}d} \Big)\nonumber\\
&\times\baRX^{}(\theta_c+\vartheta_{r_c})\baTX^\ast(\phi_c+\varphi_{r_c}).
\label{eq:fredomch2}
\end{align}
Second, we define $\bar{\alpha}_{r_{c},k}=\sum_{d=0}^{D-1}\alpharc p (d\Ts-\tau_c-\tau_{r_c})e^{-\compj \tfrac{2\pi k}{K}d}$ to rewrite~\eqref{eq:fredomch2}~as
\begin{align}
\bsfH[k]\!=\!\sqrt{\NRX\NTX} \sum_{c=1}^C \sum_{r_c=1}^{R_c} \bar{\alpha}_{r_{c},k}\baRX^{}(\theta_c\!+\!\vartheta_{r_c})\baTX^\ast(\phi_c\!+\!\varphi_{r_c}).
\label{eq:chsum} 
\end{align}
Finally, we define $\bar\balpha_k=[\bar\alpha_{1_1,k},\cdots,\bar\alpha_{R_1,k},\cdots,\bar\alpha_{1_C,k},\cdots,\bar\alpha_{R_C,k}]^\transp$, $\bARX=[\baRX(\theta_1+\vartheta_{1_1}),\cdots,$ $\baRX(\theta_C+\vartheta_{R_C})]$, and  $\bATX=[\baTX(\phi_1+\varphi_{1_1}),\cdots,\baTX(\phi_C+\varphi_{R_C})]$, to compactly write~\eqref{eq:chsum} as
\begin{align}
\bsfH[k]=\sqrt{\NRX\NTX} \bARX \diag(\bar\balpha_k) \bATX^\ast.
\label{eq:chdiag}
\end{align}

The transmit covariance of the channel on sub-carrier $k$ is defined as $\bsfRTX[k]=\frac{1}{\NRX}\bbE[\bsfH^\ast[k]\bsfH[k]]$ and the receive covariance is $\bsfRRX[k]=\frac{1}{\NTX}\bbE[\bsfH[k]\bsfH^\ast[k]]$. For the development of the proposed strategies, we make the typical assumption that for a given $k$, the gains $\bar{\alpha}_{r_{c},k}$ are uncorrelated and have variances $\sigma_{\bar\alpha_{r_c,k}}^2$. With this, the covariances across all sub-carriers are identical~\cite{Bjornson2009Exploiting}. In practice the channel delay-taps have some correlation and the covariances on all sub-carriers, though similar, are not identical. In Section~\ref{sec:simres}, we will test the robustness of the proposed strategies to the practical correlated delay-taps case. Under the uncorrelated gains simplification, the transmit covariance for fixed AoDs can be written as $\bsfRTX[k]=\NTX \bATX \bsfR_{\bar{\balpha}_k} \bATX^\ast$, and similarly $\bsfRRX[k]=\NRX\bARX^{} \bsfR_{\bar{\balpha}_k} \bARX^\ast$, where $\bsfR_{\bar{\balpha}_k}=\bbE[\bar\balpha_k^{}\bar\balpha_k^\ast]=\diag([\sigma^2_{\bar\alpha_{1_1,k}},\cdots,\sigma^2_{\bar\alpha_{R_C,k}}])$. 
 We denote the transmit covariance averaged across the sub-carriers simply as $\bsfRTX=\frac{1}{K}\sum_{k=1}^K \bsfRTX [k]$, and the averaged receive covariance as $\bsfRRX=\frac{1}{K}\sum_{k=1}^K \bsfRRX [k]$.
\section{Out-of-band covariance translation}\label{sec:intro}\label{sec:covtrans}
In this section, we address the problem of obtaining an estimate of the mmWave covariance directly from the \subsGHz~covariance with no in-band training. We continue the exposition assuming the receive covariance is translated (the transmit covariance is translated using the same procedure). To simplify notation, we remove the subscript $\rmR\rmX$ from the receive covariance in subsequent exposition. Hence, we seek to estimate $\bsfR\in\bbC^{\NRX\times\NRX}$ from $\ul{\bsfR}\in\bbC^{\ulNRX\times\ulNRX}$. We assume that the estimate of the \subsGHz~covariance $\hat{\ul \bsfR}$ is available. With no hardware constraints at \subsGHz~and a small number of antennas, empirical estimation of $\hat{\ul \bsfR}$ is easy~\cite{Czink2005Improved}. Further, the CSI at~\subsGHz~is required for the operation of the~\subsGHz~system itself. Therefore, obtaining the out-of-band information (i.e., the~\subsGHz~covariance) for mmWave covariance estimation does not incur any additional training overhead. 

In the parametric covariance translation proposed in this work, the parameters of the covariance matrix are estimated at \subsGHz. Subsequently, these parameters are used in the theoretical expressions of covariance matrices to generate mmWave covariance. To give a concrete example, consider a single-cluster channel. Assume a mean AoA $\theta$, AS $\sigma_\vartheta$, and a power azimuth spectrum (PAS) with characteristic function $\Phi(x)$ corresponding to $\sigma_\vartheta=1$. Then, under the small AS assumption, the channel covariance can be written as~\cite{Bengtsson2000Low}
\begin{align}
[\bsfR]_{i,j} = e^{\compj (i-j) 2 \pi \Delta \sin(\theta)} \Phi\big((i-j) 2 \pi \Delta  \cos(\theta) \sigma_{\vartheta} \big).
\label{eq:RXcovariance}
\end{align}   

To get a closed form expression for the covariance, \eqref{eq:RXcovariance}~is evaluated for a specific PAS. The resulting expressions for truncated Laplacian, truncated Gaussian, and Uniform distributions are summarized in Table~\ref{tab:ThExp}. For a single-cluster channel, the mean AoA and AS of the cluster (i.e., only two parameters) are estimated at \subsGHz~and subsequently used in one of the expressions (in Table~\ref{tab:ThExp}) to obtain the mmWave covariance~\cite{Ali2016Estimating}. For channels with multiple-clusters, the parametric covariance translation is complicated as the number of unknown parameters is typically higher. As an example, for only a two-cluster channel, $6$ parameters need to be estimated. The $6$ parameters are the AoA and AS of both clusters (i.e., $4$ parameters), and the power contribution of each cluster in the covariance (i.e., $2$ additional parameters). The estimation procedure is further complicated by the fact that the number of clusters is unknown, and needs to be estimated. In the following, we outline a parametric covariance translation procedure for multi-cluster channels.

\begin{table*}[h!]\centering
\caption{Theoretical expressions for covariance $[\bsfR]_{i,j}$}
\label{tab:ThExp}
\begin{tabular}{|c|c|}
\hline
\textbf{PAS} & \textbf{Expression}\\
\hline
\rule{0pt}{4ex} Truncated Laplacian~\cite{Forenza2007Simplified} & $\frac{\beta e^{\compj 2\pi \Delta (i-j) \sin(\theta)}}{1+\frac{\sigma_\vartheta^2}{2}[2\pi \Delta (i-j) \cos(\theta)]^2},~\beta=\frac{1}{1-e^{-\sqrt{2}\pi/\sigma_\vartheta}}$\\
\hline
\rule{0pt}{4ex} Truncated Gaussian~\cite{Bengtsson2000Low} & $ e^{-((i-j) 2 \pi \Delta  \cos(\theta) \sigma_{\vartheta})^2} e^{\compj 2\pi \Delta (i-j) \sin(\theta)}$ \\
\hline
\rule{0pt}{4ex} Uniform~\cite{Bengtsson2000Low} & $ \frac{\sin((i-j)\varrho_\vartheta)}{((i-j)\varrho_\vartheta)} e^{\compj  2 \pi \Delta (i-j) \sin(\theta)},~\varrho_\vartheta=\sqrt{3}\times2\pi\Delta\sigma_\vartheta\cos(\theta)$\\
\hline
\end{tabular}
\end{table*}

For clarity in exposition, we consider the covariance translation to be a four-step procedure and explain each step separately. In the first three steps, the parameters are estimated from \subsGHz~covariance. These parameters are: (i) the number of clusters, (ii) the AoA and AS of each cluster, and (iii) the power contribution of each cluster in the covariance. The fourth step uses the estimated parameters to obtain the mmWave covariance. 
\subsubsection{Estimating the number of clusters }
The first step in the parametric translation is to estimate the number of clusters in the channel. Enumerating the number of signals impinging on an array is a fundamental problem known as model order selection. The most common solution is to use information theoretic criteria e.g., minimum description length (MDL)~\cite{Wax1985Detection} or Akaike information criterion (AIC)~\cite{Akaike1974new}. The model order selection algorithms estimate the number of point-sources and do not directly give the number of clusters (i.e., distributed/scattered sources). To obtain the number of clusters, we make the following observation. The dimension of the signal subspace of a covariance matrix corresponding to a two point-source channel is $2$. In addition, it was shown in~\cite{Bengtsson2000Low} that, the dimension of the signal subspace of the covariance matrix corresponding to a channel with a single-cluster and small AS is also $2$. With this observation, the model order selection algorithms can be used for estimating the number of clusters. Specifically, if the number of point-sources estimated by a model order selection algorithm is $\hat{\ul{\rmP\rmS}}$, we consider the channel to have $\hat{\ul{C}}=\max\{\lfloor\frac{\hat{\ul{\rmP\rmS}}}{2}\rfloor,1\}$ clusters. The term $\lfloor\frac{\hat{\ul{\rmP\rmS}}}{2}\rfloor$ equates the number of clusters to half the point-sources (exactly for even number of point-sources, and approximately for odd). We set the minimum number of clusters to $1$ to deal with the case of a single source with very small AS. 

\subsubsection{Estimating angle-of-arrival and angle spread}
Prior work has considered the specific problem of estimating both the AoA and the AS jointly from an empirically estimated spatial covariance matrix (e.g., maximum likelihood estimation~\cite{Trump1996Estimation}, covariance matching estimation~\cite{Ottersten1998Covariance}, and spread root-MUSIC estimation~\cite{Bengtsson2000Low}). We use spread root-MUSIC algorithm due to its low computational complexity and straightforward extension for multiple-clusters. We refer the interested reader to~\cite{Bengtsson2000Low} for the details of the spread root-MUSIC algorithm. Here, we focus instead on a robustification necessary for the success of the proposed strategy.

If the channel has a single-cluster and the AS is very small, the spread root-MUSIC algorithm can fail~\cite{Bengtsson2000Low}. In this case, the algorithm returns an arbitrary AoA and an unusually large AS. This failure can be detected by setting a threshold on AS. Specifically, if the estimated AS is larger than the threshold value, AoA only estimation (e.g., using root-MUSIC~\cite{Barabell1983Improving}) is performed and the AS is set to zero. In addition, the AoA only estimation should also be performed when only a single point-source is detected while estimating the number of clusters.

\subsubsection{Estimating the power contribution of each cluster}
We denote the covariance due to the $\ul{c}$th cluster as $\ul \bsfR(\ul{\theta}_{\ul{c}},\ul{\sigma}_{\ul{\vartheta},\ul{c}})$. This covariance is calculated using the expressions in~Table~\ref{tab:ThExp}. Specifically, the AoA and AS estimated from the second step are used, and the covariance expressions are evaluated for the number of antennas in the \subsGHz~system. Further, we denote the power contribution of the $\ul c$th cluster as $\ul\epsilon_{\ul c}$. Now, under the assumption of uncorrelated clusters, the total covariance can be written as 
\begin{align}
\ul \bsfR=\sum_{\ul{c}=1}^{\ul C}{\ul \epsilon}_{\ul c}\ul \bsfR(\ul \theta_{\ul c},\ul\sigma_{\ul \vartheta,\ul c}) + \ul\sigma_{\ul \bsfn}^2\bI,
\label{eq:totalcovmatrix}
\end{align}
Introducing the vectorized notation $\bsfr=\vec(\bsfR)$ for the covariance matrix, we re-write \eqref{eq:totalcovmatrix}~as
\begin{align}
\ul \bsfr=\left[\ul \bsfr(\ul \theta_1,\ul \sigma_{\ul \vartheta,1}),\cdots,\ul \bsfr(\ul \theta_{\ul C},\ul \sigma_{\ul \vartheta,\ul C}),  \vec(\bI)\right] 
\begin{bmatrix}
\ul\epsilon_1\\
\vdots\\
\ul\epsilon_{\ul C}\\
\ul\sigma_{\ul \bsfn}^2
\end{bmatrix}.
\label{eq:totalcovvector}
\end{align}
The system of equations~\eqref{eq:totalcovvector} can be solved (e.g., using non-negative least-squares) to obtain the power
contributions of the clusters.
\subsubsection{Obtaining the mmWave covariance}
The mmWave covariance corresponding to the $\ul{c}$th cluster is denoted as $\bsfR(\ul{\theta}_{\ul{c}},\ul{\sigma}_{\ul{\vartheta},\ul{c}})$. Similar to \subsGHz~covariance $\ul\bsfR(\ul{\theta}_{\ul{c}},\ul{\sigma}_{\ul{\vartheta},\ul{c}})$, the mmWave covariance $\bsfR(\ul{\theta}_{\ul{c}},\ul{\sigma}_{\ul{\vartheta},\ul{c}})$ is also calculated using the expressions in~Table~\ref{tab:ThExp}. The covariance expressions, however, are now evaluated for the number of antennas in the mmWave system. With this, we have the mmWave covariances corresponding to all ${\ul C}$ clusters. Further, we have estimates of the cluster power contributions ${\ul\epsilon}_{\ul c}$ from step three. We now use the mmWave analog of~\eqref{eq:totalcovmatrix} to obtain the mmWave covariance, i.e.,
\begin{align}
\bsfR=\sum_{\ul c=1}^{\ul C}{\ul\epsilon}_{\ul c}\bsfR(\ul\theta_{\ul c},\ul\sigma_{\ul \vartheta,\ul c}).
\label{eq:totalcovmatrixmmWave}
\end{align}

We have purposefully ignored the contribution of white noise in~\eqref{eq:totalcovmatrixmmWave}. Though it is possible to estimate the noise variance at mmWave, it is not necessary for our application. This is because the hybrid precoders/combiners are designed to approximate the dominant singular vectors of the channel covariance matrix~\cite{Alkhateeb2013Hybrid}. As the singular vectors of a covariance matrix do not change with the addition of a scaled identity matrix, the addition is inconsequential.
\section{Out-of-band aided compressed covariance estimation}\label{sec:comcovest}
In this section, we formulate the problem of compressed covariance estimation in hybrid mmWave systems. There is some prior work on covariance estimation in hybrid mmWave systems, see e.g.,~\cite{Mendez-Rial2015Adaptive,Park2016Spatial}. In~\cite{Park2016Spatial}, the Hermitian symmetry of the covariance matrix and the limited scattering of the mmWave channel are exploited. By exploiting Hermitian symmetry, \cite{Park2016Spatial} outperforms the methods that only use sparsity e.g., \cite{Mendez-Rial2015Adaptive}. We closely follow the framework of~\cite{Park2016Spatial} for compressed covariance estimation. As only SIMO systems were considered in~\cite{Park2016Spatial}, we extend~\cite{Park2016Spatial} to MIMO systems. Subsequently use the concepts of weighted sparse signal recovery to aid the in-band compressed covariance estimation with out-of-band information. 
\subsection{Problem formulation}
We start with an implicit understanding that the formulation is per sub-carrier, but do not explicitly mention $k$ in the equations to reduce the notation overhead. We assume a single stream transmission in the training phase without loss of generality.  With $\Ns=1$, the post RF-combining received signal can be written as
\begin{align}
\bsfy_t=\bWRFt^\ast\bsfH_t\bff + \bWRFt^\ast \bsfn_t,
\label{eq:rxdcombfirst}
\end{align}
where we have introduced a discrete time index $t$. The time index $t$ denotes a snapshot. We assume that the channel remains fixed inside a snapshot. Further, we have used vector notation for the precoder to highlight the single stream case and have a made a simplistic choice $\sfs_t=1$ for ease of exposition. 

For our application, a single snapshot consists of two consecutive OFDM training frames. This is because the transmitter can synthesize an omni-directional precoder using two consecutive transmissions. An example is that in the first training frame, we use $\bff_1=\frac{1}{\sqrt{\NTX}}[1,\cdots,1]^\transp$ , and in the second we use $\bff_2=\frac{1}{\sqrt{\NTX}}[1,-1,\cdots,-1]^\transp$. To see how these precoders can give omni-directional transmission, we write the received signal in the first transmission of the $t$th snapshot as
\begin{align}
\bsfy_{t,1}=\bWRFt^\ast\bsfH_t\bff_1 + \bWRFt^\ast \bsfn_{t,1},
\label{eq:rxdcombfirst1}
\end{align}
where $\bsfn_{t,1}\sim\cC\cN(\bzero,\sigma_{\bsfn}^2\bI)$, and the received signal in the second transmission of the $t$th snapshot as 
\begin{align}
\bsfy_{t,2}=\bWRFt^\ast\bsfH_t\bff_2 + \bWRFt^\ast \bsfn_{t,2}.
\label{eq:rxdcombfirst2}
\end{align}
Now we consider the received signal~\eqref{eq:rxdcombfirst} in the $t$th snapshot, as the sum of the two individual transmissions, i.e.,
\begin{align}
\bsfy_t&=\bsfy_{t,1}+\bsfy_{t,2}=\bWRFt^\ast\bsfH_t(\bff_1+\bff_2) + \bWRFt^\ast(\bsfn_{t,1}+\bsfn_{t,2}),\nonumber\\
&=\frac{2}{\sqrt{\NTX}}\bWRFt^\ast\bsfH_t[1,0,\cdots,0]^\transp + \bWRFt^\ast(\bsfn_{t,1}+\bsfn_{t,2}).
\label{eq:rxdcombfirst3}
\end{align}
Thus effectively, combined over two transmissions, the precoder behaves as an omni-directional precoder, and effectively reduces a MIMO system to a SIMO system. The factor $\dfrac{2}{\sqrt{\NTX}}$ in~\eqref{eq:rxcombgt} denotes the power lost in trying to achieve omni-directional transmission. Similarly, as two independent transmissions are summed up, we have $\bsfn_t\sim\cC\cN(\bzero,2\sigma_{\bsfn}^2\bI)$. Depending on the scenario, this lost in \SNR~(due to low received power and increased noise variance) may be tolerated or compensated by repeated transmission. Assuming that the path angles do not change during the $T$ snapshots, the MIMO channel~\eqref{eq:chdiag} can be written as 
\begin{align}
\bsfH_t=\sqrt{\NRX\NTX}\bARX^{}\diag(\bar\balpha_t)\bATX^\ast,~t=1,2,\cdots,T.
\end{align}
and further
\begin{align}
\bsfH_t[1,0,\cdots,0]^\transp&=\sqrt{\NRX}\bARX^{}\bar\balpha_t.
\end{align}
Now, the received signal~\eqref{eq:rxdcombfirst} can be simply re-written as
\begin{align}
\bsfy_t=\bWRFt^\ast\bARX \bsfg_t + \bWRFt^\ast \bsfn_t.
\label{eq:rxcombgt}
\end{align}
where we have introduced $\bsfg_t=2\sqrt{\dfrac{\NRX}{\NTX}}\bar\balpha_t$. After which, the covariance of the received signal $\bsfy_t$ is
\begin{align}
\bsfRy=\bbE[\bsfy\bsfy^\ast]=\bWRF^\ast\bARX\bsfRg\bARX^\ast\bWRF+2\sigma_{\bsfn}^2\bWRF^\ast\bWRF,
\end{align}
where $\bsfRg=\bbE[\bsfg^{}\bsfg^\ast]$. By the definition of $\bsfg_t$, we have
\begin{align}
\bsfRg&=4\frac{\NRX}{\NTX}  \bbE [ \bar\balpha \bar\balpha^\ast ]=4\frac{\NRX}{\NTX} \bR_{\bar\balpha}.
\label{eq:Rg}
\end{align}

As the RX covariance can be written as $\bsfRRX=\NRX\bARX \bsfR_{\bar\balpha} \bARX^\ast$, once $\bsfRg$ and the AoAs are estimated, the receive covariance can be obtained. Hence, the main problem is to recover $\bsfRg$ and the AoAs from $\bsfRy$. To do so, we re-write~\eqref{eq:rxcombgt} as
\begin{align}
\bsfy_t\approx\bWRFt^\ast\bbARX \bar\bsfg_t + \bWRFt^\ast \bsfn_t.
\label{eq:rxcombgtapprox}
\end{align}
where $\bbARX$ is a $\NRX \times B_{\rmR\rmX}$ dictionary matrix whose columns are composed of the array response vector associated with a predefined set of AoAs, and $\bar\bsfg_t$ is a $B_{\rmR\rmX}\times 1$ vector. The approximation in~\eqref{eq:rxcombgtapprox} appears as the true AoAs in the channel are not confined to the predefined set. Note that even though there are several paths in the channel, due to clustered behavior, the AoAs are spaced closely and hence the number of coefficients with significant magnitude in $\bar\bsfg_t$ is $L\ll B_{\rmR\rmX}$. 

Due to limited scattering of the channel, the matrix, $\bar\bsfg_t\bar\bsfg_t^\ast$, has a Hermitian sparse structure. This structure can be exploited in the estimation of $\hat{\bsfR}_{\bar \bsfg}$ via the algorithm called covariance OMP (COMP)~\cite{Park2016Spatial}. The performance of the COMP algorithm, however, is limited by the number of RF chains used in the systems. This limitation can be somewhat circumvented by using time-varying RF-combiners $\bWRFt$~\cite{Alkhateeb2014Channel,Park2016Spatial}. Specifically, we use a distinct RF-combiner in each snapshot. 
The modification of COMP that uses time-varying RF-combiners is called dynamic covariance OMP (DCOMP)~\cite{Park2016Spatial}. 

\textbf{Remark:} Our extension of~\cite{Park2016Spatial} (from SIMO to MIMO systems) is based on omni-directional precoding to reduce the MIMO system to a SIMO system. Another possible extension of~\cite{Park2016Spatial} to MIMO systems was outlined in~\cite{Park2017Spatial}. Specifically, the full MIMO covariance~$\bR_{\rm full}=\bbE[\vec(\bH)\vec(\bH)^\ast]$ was estimated in~\cite{Park2017Spatial}, though with high computational complexity. To understand this, consider $\NTX=\NRX=64$ antennas and  $4$x oversampled dictionaries i.e., $B_{\rmR\rmX}=B_{\rmT\rmX}=256$. These are modest system parameters for mmWave communication and were used in~\cite{Park2017Spatial}. With these parameters, the full covariance estimation requires support search over a  $B_{\rmR\rmX}B_{\rmT\rmX}\times B_{\rmR\rmX}B_{\rmT\rmX}=65536\times 65536$ dimensional Hermitian-sparse unknown. In comparison, our approach requires the recovery of a $B_{\rmR\rmX}\times B_{\rmR\rmX}$ unknown and a $B_{\rmT\rmX}\times B_{\rmT\rmX}$ unknown, i.e., two $256\times 256$ dimensional Hermitian-sparse unknowns. Furthermore, in mmWave systems, the precoders and combiners used at mmWave are designed based on transmit and receive covariances separately. Therefore, our approach is more appropriate than~\cite{Park2017Spatial}.  
\subsection{Weighted compressed covariance estimation}
The compressed covariance estimation algorithm divides the AoA range into $B_{\rmR\rmX}$ intervals using the dictionary $\bbARX$ and assumes that the prior probability of the support is uniform, i.e., the active path angles on the grid have the same probability $p$ throughout the AoA range. This is a reasonable assumption under no prior information about the AoAs. If some prior information about the non-uniformity in the support is available, the compressed covariance estimation algorithms can be modified to incorporate this prior information. Note that the DCOMP algorithm is an extension of the OMP algorithm to the covariance estimation problem. In~\cite{Scarlett2013Compressed} a modified OMP algorithm called logit weighted - OMP (LW-OMP) was proposed for non-uniform prior probabilities. Here we use logit weighting in compressed covariance estimation via DCOMP algorithm. Assume that $\brho \in \bbR^{B_{\rmR\rmX}\times 1}$ is the vector of prior probabilities $0 \leq [\brho]_i \leq 1$. Then we introduce an additive weighting function $w([\brho]_i)$ to weight the DCOMP algorithm according to prior probabilities. The authors refer the interested reader to~\cite{Scarlett2013Compressed} for the details of logit weighting and the selection of $w([\brho]_i)$. The general form of $w([\brho]_i)$, however, can be given as $w([\brho]_i)=J_{\rmw} \log\dfrac{[\brho]_i}{1-[\brho]_i}$, where $J_{\rmw}$ is a constant that depends on the number of active coefficients in $\hat{\bsfR}_{\bar \bsfg}$, the amplitude of the unknown coefficients, and the noise level~\cite{Scarlett2013Compressed}. We present the logit weighted - DCOMP (LW-COMP) in Algorithm \ref{alg:COMP}. In the absence of prior information, LW-DCOMP can be solved using uniform probability $\brho=\varepsilon\bone$, where $0<\varepsilon<=1$, which is equivalent to DCOMP.

\begin{algorithm}
\caption{Logit weighted - Dynamic Covariance OMP (LW-DCOMP)}
\begin{algorithmic}[1]
\INPUT $\bWRFt\forall T,\bsfy_t\forall T, \bbARX,\sigma_{\bsfn}^2,\bp$
\INIT $\bV_t=\bsfy_t\bsfy_t^\ast\forall t, \cS=\emptyset, i=0$
\WHILE{($\sum_t\|\bV_t\|_\rmF > 2\sigma_{\bsfn}^2\sum_t\|\bWRFt^\ast\bWRFt\|_\rmF$ and $i<\MRX$)}
\STATE $j=\arg\max_i \sum_{t=1}^{T} | [\bWRFt\bbARX]^\ast_{:,i}\bV_t [\bWRFt\bbARX]^{}_{:,i} | + w([\brho]_i)$
\STATE $\cS=\cS\cup\{j\}$
\STATE $\hbRbgt=[\bWRFt\bbARX]^\dagger_{:,\cS}(\bsfy_t\bsfy_t^\ast)\big([\bWRFt\bbARX]^\dagger_{:,\cS}\big)^\ast,~\forall t$
\STATE $\bV_t=\hbRyt-[\bWRFt\bbARX]^{}_{:,\cS}\hbRbgt[\bWRFt\bbARX]^\ast_{:,\cS},\forall t$
\STATE $i=i+1$
\ENDWHILE
\OUTPUT $\cS$, $\hbRbg=\frac{1}{T}\sum_{t=1}^{T}\hbRbgt$.
\end{algorithmic}
\label{alg:COMP}
\end{algorithm}

The spatial information from \subsGHz~can be used to obtain a proxy for $\brho$. Specifically, let us define an $\ulNRX\times B_{\rmR\rmX}$ dictionary matrix $\ul{\bar{\bA}}_{\rmR\rmX}$, which is obtained by evaluating the \subsGHz~array response vectors at the same points as the mmWave array response vector is evaluated to get the dictionary matrix $\bbARX$. Then a simple proxy of the probability vector based on the~\subsGHz~covariance can be obtained as follows
\begin{align}
\brho=J_{\uprho} \frac{|\frac{1}{B_{\rmR\rmX}} \sum_{b=1}^{B_{\rmR\rmX}} [\ul{\bar{\bA}}_{\rmR\rmX}^\ast~\ul\bsfR~\ul{\bar{\bA}}_{\rmR\rmX}]_{:,b}|}{\max |\frac{1}{B_{\rmR\rmX}} \sum_{b=1}^{B_{\rmR\rmX}} [\ul{\bar{\bA}}_{\rmR\rmX}^\ast~\ul\bsfR~\ul{\bar{\bA}}_{\rmR\rmX}]_{:,b}|},
\label{eq:rhoproxy}
\end{align}
where $J_{\uprho}$ is an appropriately chosen constant that captures the reliability of the out-of-band information. The reliability is a function of the \subsGHz~and  mmWave spatial congruence, and operating \SNR. A higher value for $J_{\uprho}$ should be used for for highly reliable information. For the results in Section~\ref{sec:simres}, we optimized for $J_{\uprho}$ by testing a few values and choosing the one that gave the best performance.
\section{\SNR~degradation due to covariance mismatch}\label{sec:analysis}
We start by providing the preliminaries required for analyzing the loss in received post-processing \SNR~due to imperfections in channel covariance estimates. We perform the analysis for a single path channel and as such single stream transmission suffices. This can be considered an extreme case where the AS is zero, and as such the only AoA is the mean AoA $\theta$. For the channel model presented in Section~\ref{sec:syschmodel}, this implies that the receive covariance can be written as $\bsfRRX=\NRX\sigma_\alpha^2\baRX(\theta)\baRX^\ast(\theta)$. Similarly, the transmit covariance can be written as $\bsfRTX=\NTX\sigma_\alpha^2\baTX(\phi)\baTX^\ast(\phi)$. The subspace decomposition of the receive covariance matrix is
\begin{align}
\bsfRRX=\bURX\bSigma\bURX^\ast=\bURXs\bSigmas\bURXs^\ast + \bURXn\bSigman\bURXn^\ast,
\label{eq:subsapcetrue}
\end{align}
where the columns of $\bU_{\rmR\rmX}$ are the singular vectors and the diagonal entries of $\bSigma$ are the singular values. Further, the columns of $\bURXs$ are the singular vectors associated with non-zero singular values and span the signal subspace. For a single path channel, there is only one non-zero singular value and only one singular vector spanning the signal subspace. The columns of $\bURXn$ are the singular vectors associated with zero singular values and span the noise subspace. The transmit covariance $\bsfRTX$ also has a similar subspace decomposition. For statistical digital precoding (combining), the singular vectors of the transmit (receiver) covariance matrices are used as precoders (combiners). For a single path channel, the array response vector evaluated at AoA $\theta$ is a valid singular vector of $\bsfRRX$. As such, the received signal with digital precoding/combing based on singular vectors of the covariance matrices can be written as
\begin{align}
\sfy&=\bURXs^\ast \bsfH \bUTXs \sfs + \bURXs^\ast\bsfn,\nonumber\\
&=\sqrt{\NRX\NTX}\alpha \baRX^\ast(\theta) \baRX(\theta) \baTX^\ast(\phi)  \baTX(\phi) \sfs + \bURXs^\ast \bsfn,\nonumber\\
&=\sqrt{\NRX\NTX}\alpha \sfs + \bURXs^\ast\bsfn.
\label{eq:rxdtruecov}
\end{align}

From~\eqref{eq:rxdtruecov}, the average received \SNR~with perfect covariance knowledge is
\begin{align}
\SNR_{\bsfR}=\dfrac{\NRX\NTX \bbE[|\alpha|^2]\bbE[|\sfs|^2]}{\bbE[\|\bURXs^\ast\bsfn\|_\rmF^2]}.
\label{eq:SNRwithperfectcov}
\end{align}
Recall from Section~\ref{sec:syschmodel} that the variance of channel paths is $\bbE[|\alpha|^2]= \sigma_\alpha^2$, and the transmit symbol power is $\bbE[|\sfs|^2]=\frac{P}{K}$.  Further, with noise $\bsfn\sim\cC\cN(\bzero,\sigma_{\bsfn}^2\bI)$, we have $\bbE[\|\bURXs^\ast\bsfn\|_\rmF^2]=\sigma_{\bsfn}^2$. Therefore, we re-write~\eqref{eq:SNRwithperfectcov} as
\begin{align}
\SNR_{\bsfR}=\dfrac{\NRX\NTX P \sigma_\alpha^2}{K \sigma_{\bsfn}^2}.
\label{eq:SNRwithperfectcovsimplified}
\end{align}

We model the error in the estimated covariance as additive, i.e., the true covariance matrix and the estimated covariance matrix differ by a perturbation $\delbsfR$ such that $\hat{\bsfR}_{\rmR\rmX}=\bsfRRX+\delbsfRRX$ and $\hat{\bsfR}_{\rmT\rmX}=\bsfRTX+\delbsfRTX$. A decomposition of the estimated covariance matrix $\hat{\bsfR}_{\rmR\rmX}$  (similar to~\eqref{eq:subsapcetrue}) is 
\begin{align}
\hat{\bsfR}_{\rmR\rmX}=\hbURX \hbSigma \hbURX^\ast=\hbURXs\hbSigmas\hbURXs^\ast + \hbURXn\hbSigman\hbURXn^\ast.
\label{eq:subspaceest}
\end{align}
The vector $\hbURXs^\ast$ can be written as a sum of two vectors $\bURXs$ and $\delURXs$. Hence, the vector $\hbURXs$ will typically not meet the normalization $\|\hbURXs\|^2=1$ assumed in the system model. We ensure that the power constraint on the precoders/combiners is met by using a normalized version. Hence, the received signal with digital precoding/combining based on the imperfect covariance is
\begin{align}
\sfy= \frac{\hbURXs^\ast}{\| \hbURXs \|} \bsfH \frac{\hbUTXs^\ast}{\| \hbUTXs \|} \sfs + \frac{\hbURXs^\ast}{\| \hbURXs \|}\bsfn.
\label{eq:rxdestcov}
\end{align}
Now we quantify the averaged receive \SNR~with imperfect covariance knowledge in the following theorem.

\begin{thm}
For the received signal $\sfy$ in~\eqref{eq:rxdestcov}, the precoder that follows the model $\hbUTXs=\hbUTXs+\delUTXs$, and the combiner that follows the model $\hbURXs=\hbURXs+\delURXs$, the averaged received SNR~is
\begin{align}
\SNR_{\hat{\bsfR}}=\dfrac{\NRX\NTX P \sigma_\alpha^2}{K \sigma_{\bsfn}^2 \| \hbURXs \|^2 \| \hbUTXs \|^2}.
\label{eq:SNRest}
\end{align}
\end{thm}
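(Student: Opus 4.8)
The plan is to evaluate $\SNR_{\hat\bsfR}$ directly as the ratio of averaged signal power to averaged noise power in the received signal~\eqref{eq:rxdestcov}, leaning on the rank-one structure of the single-path channel throughout. First I would substitute $\bsfH=\sqrt{\NRX\NTX}\,\alpha\,\baRX(\theta)\baTX^\ast(\phi)$ into~\eqref{eq:rxdestcov} and split it into a signal term and a noise term, with combiner $\hbURXs^\ast/\|\hbURXs\|$ and precoder $\hbUTXs/\|\hbUTXs\|$. Since the channel is rank one, its true signal-subspace singular vectors are exactly the array responses, $\bURXs=\baRX(\theta)$ and $\bUTXs=\baTX(\phi)$, so the signal part becomes $\sqrt{\NRX\NTX}\,\alpha\,(\hbURXs^\ast\bURXs)(\bUTXs^\ast\hbUTXs)\,\sfs/(\|\hbURXs\|\,\|\hbUTXs\|)$.

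The crucial structural input comes from the first-order singular-vector perturbation theory of~\cite{Liu2008First}: the first-order perturbation of a singular vector lies in the span of the remaining singular vectors and is hence orthogonal to the vector it perturbs, so $\delURXs^\ast\bURXs=0$ and $\delUTXs^\ast\bUTXs=0$. Inserting the models $\hbURXs=\bURXs+\delURXs$ and $\hbUTXs=\bUTXs+\delUTXs$ then yields $\hbURXs^\ast\bURXs=1$ and $\bUTXs^\ast\hbUTXs=1$, so both inner products in the signal term collapse to unity. With $\bbE[|\alpha|^2]=\sigma_\alpha^2$ and $\bbE[|\sfs|^2]=P/K$ (as noted before~\eqref{eq:SNRwithperfectcovsimplified}), the averaged signal power is $\NRX\NTX\sigma_\alpha^2(P/K)/(\|\hbURXs\|^2\|\hbUTXs\|^2)$.

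For the noise term $\hbURXs^\ast\bsfn/\|\hbURXs\|$, I would use $\bsfn\sim\cC\cN(\bzero,\sigma_{\bsfn}^2\bI)$ to get averaged power $\|\hbURXs\|^{-2}\hbURXs^\ast\bbE[\bsfn\bsfn^\ast]\hbURXs=\|\hbURXs\|^{-2}\sigma_{\bsfn}^2\|\hbURXs\|^2=\sigma_{\bsfn}^2$; the normalization cancels and the noise power is identical to the perfect-CSI case. Dividing the signal power by $\sigma_{\bsfn}^2$ produces~\eqref{eq:SNRest} at once. It is worth noting the asymmetry here: the noise computation needs only the whiteness of $\bsfn$, whereas the signal computation genuinely relies on the perturbation being orthogonal to the unperturbed vector.

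The main obstacle is justifying the orthogonality $\delURXs^\ast\bURXs=0$ and invoking it correctly, because it is precisely what pins the useful inner products to unity and thereby routes the entire loss through the norm inflation $\|\hbURXs\|^2=1+\|\delURXs\|^2\ge1$ (and likewise at the TX). Once the rank-one channel is used to identify $\bURXs,\bUTXs$ with the array responses and the first-order perturbation structure of~\cite{Liu2008First} is imported, the remaining algebra is routine bookkeeping.
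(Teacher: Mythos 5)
Your proof is correct and follows essentially the same route as the paper's Appendix A: both substitute the rank-one channel, decompose $\hbURXs=\bURXs+\delURXs$ (and likewise at the TX), and invoke the first-order perturbation structure of~\cite{Liu2008First} --- namely that $\delURXs$ lies in the noise subspace, so $\delURXs^\ast\bURXs=0$ --- to eliminate the cross terms, leaving the same signal and noise powers. Your packaging of the cancellation as $\hbURXs^\ast\bURXs=\bUTXs^\ast\hbUTXs=1$ is just a compressed form of the paper's four-term expansion in which the three perturbation terms vanish term by term.
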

\begin{proof}
See Appendix A.
\end{proof}

Now given the \SNR~with perfect covariance~\eqref{eq:SNRwithperfectcovsimplified} and the \SNR~with imperfect covariance ~\eqref{eq:SNRest}, the loss in the \SNR, $\gamma$ is
\begin{align}
\gamma=\frac{\SNR_{\bsfR}}{\SNR_{\hat{\bsfR}}}=\| \hbURXs \|^2 \| \hbUTXs \|^2.
\label{eq:snrlosssigsub}
\end{align}

The \SNR~loss~\eqref{eq:snrlosssigsub} is given in terms of the vectors that span the signal subspace of the estimated covariance matrices. In the following theorem, we give the loss explicitly in terms of the perturbations $\delbsfRRX$ and $\delbsfRTX$.

\begin{thm}
The loss in the SNR~$\gamma$ can be written approximately as
\begin{align}
\gamma\approx\left(1+\frac{\|\delbsfRRX\bURXs\|^2}{\NRX^2\sigma_\alpha^4}\right)\left(1+\frac{\|\delbsfRTX\bUTXs\|^2}{\NTX^2\sigma_\alpha^4}\right),
\label{eq:snrlossapprox}
\end{align}
and can be bounded as
\begin{align}
\gamma\lesssim\left(1+\frac{\sigma_{\max}^2(\delbsfRRX)}{\NRX^2\sigma_\alpha^4}\right)\left(1+\frac{\sigma_{\max}^2(\delbsfRTX)}{\NTX^2\sigma_\alpha^4}\right),
\label{eq:snrlossupper}
\end{align}
and
\begin{align}
\gamma\gtrsim\left(1+\frac{\sigma_{\min}^2(\delbsfRRX)}{\NRX^2\sigma_\alpha^4}\right)\left(1+\frac{\sigma_{\min}^2(\delbsfRTX)}{\NTX^2\sigma_\alpha^4}\right),
\label{eq:snrlosslower}
\end{align}
where $\sigma_{\max}(\cdot)$ and $\sigma_{\min}(\cdot)$ represent the largest and smallest singular value of the argument.
\end{thm}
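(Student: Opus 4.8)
The plan is to start from the identity $\gamma=\|\hbURXs\|^2\|\hbUTXs\|^2$ already established in \eqref{eq:snrlosssigsub} and to evaluate each factor by first-order singular-vector perturbation theory \cite{Liu2008First}. Since the receive and transmit sides are structurally identical, I would carry out the argument only for the receive side and then copy it verbatim for the transmit side. The crucial structural observation is that, for a single-path channel, $\bsfRRX=\NRX\sigma_\alpha^2\baRX(\theta)\baRX^\ast(\theta)$ is rank one: it has a single nonzero singular value $\NRX\sigma_\alpha^2$ with unit-norm signal singular vector $\bURXs=\baRX(\theta)$, while every noise-subspace singular value is zero. This clean eigenstructure is what makes the perturbation expansion tractable.

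Next I would invoke the first-order expansion of the simple signal eigenvector under the additive perturbation $\hat{\bsfR}_{\rmR\rmX}=\bsfRRX+\delbsfRRX$. Summing the standard perturbation terms over the noise-subspace vectors, each of which contributes with the eigengap $\NRX\sigma_\alpha^2-0$ in the denominator, gives
\begin{align}
\delURXs \approx \frac{1}{\NRX\sigma_\alpha^2}\,\bURXn\bURXn^\ast\,\delbsfRRX\,\bURXs .
\end{align}
The key point is that $\delURXs$ lies entirely in the noise subspace and is therefore orthogonal to $\bURXs$, so the cross term vanishes and, using $\|\bURXs\|=1$,
\begin{align}
\|\hbURXs\|^2 = 1 + \|\delURXs\|^2 \approx 1 + \frac{\|\delbsfRRX\bURXs\|^2}{\NRX^2\sigma_\alpha^4},
\end{align}
where the last step discards the (typically small) signal-subspace component $|\bURXs^\ast\delbsfRRX\bURXs|^2$ so that the noise-projected norm is replaced by the full norm $\|\delbsfRRX\bURXs\|$. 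Repeating this for the transmit factor and multiplying the two expressions yields the approximation \eqref{eq:snrlossapprox}.

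For the bounds, I would exploit that $\bURXs$ is a unit vector, so the extremal characterization of singular values gives $\sigma_{\min}^2(\delbsfRRX)\le\|\delbsfRRX\bURXs\|^2\le\sigma_{\max}^2(\delbsfRRX)$, and likewise on the transmit side. Substituting these brackets into the per-side approximate expressions and multiplying produces the upper bound \eqref{eq:snrlossupper} and the lower bound \eqref{eq:snrlosslower}. The parts that are genuinely clean are the orthogonality that makes $\|\hbURXs\|^2=1+\|\delURXs\|^2$ exact and the unit-norm property that makes the singular-value bracketing immediate. The main obstacle is justifying the approximations that produce the $\approx$, $\lesssim$, and $\gtrsim$ symbols: the first-order perturbation is valid only when $\|\delbsfRRX\|$ is small relative to the eigengap $\NRX\sigma_\alpha^2$ (exactly the quantity appearing in the denominators), and the replacement of the noise-projected norm by the full norm requires the dropped in-subspace and second-order terms to be negligible. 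Controlling those dropped terms, rather than any delicate computation, is where all the care lies.
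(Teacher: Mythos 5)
Your proposal is correct and follows essentially the same route as the paper's Appendix~B: the same first-order perturbation formula $\delURXs=\bURXn\bURXn^\ast\delbsfRRX\bURXs/(\NRX\sigma_\alpha^2)$ from~\cite{Liu2008First}, the same orthogonality argument giving $\|\hbURXs\|^2=1+\|\delURXs\|^2$ exactly, the same replacement of the noise-projected norm by the full norm (the paper phrases this as $\bURXn\bURXn^\ast\approx\bI$ with error $1/\NRX$, while you equivalently note that the dropped term is $|\bURXs^\ast\delbsfRRX\bURXs|^2$), and the identical singular-value bracketing of $\|\delbsfRRX\bURXs\|$ for the upper and lower bounds.
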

\begin{proof}
See Appendix B.
\end{proof}

The \SNR~loss expression~\eqref{eq:snrlossapprox} admits a simple explanation. The loss is proportional to the alignment of the true signal subspace to the column space of the perturbation matrix. If the true signal subspace is orthogonal to the column space of the perturbation matrix i.e., $\bURXs$ lies in the null space of $\delbsfRRX$, then there is no loss in the \SNR~due to the perturbation, which makes intuitive sense. Further, the results~\eqref{eq:snrlossupper} and~\eqref{eq:snrlosslower} give the bounds on \SNR~loss explicitly in the form of the singular values of the perturbation matrices $\delbsfRRX$ and $\delbsfRTX$.
\section{Simulation results}\label{sec:simres}
In this section, we present simulation results to test the effectiveness of the proposed covariance estimation strategies and validate the \SNR~loss analysis. First, we test the performance of the proposed covariance estimation strategies in simpler channels, assuming that the parameters governing the \subsGHz~and mmWave channels are consistent. This is to say that the cluster in the \subsGHz~and mmWave channel has the same AoA, AoD, arrival AS, and departure AS. Subsequently, we study the performance of the proposed covariance estimation strategies in realistic channels when the parameters of the \subsGHz~and mmWave channels do not match. Finally, we validate the \SNR~loss analysis. To show the benefit of the out-of-band information in comparison with in-band only training, we compare the proposed strategies with the DCOMP algorithm~\cite{Park2016Spatial}. For covariance estimation, the DCOMP algorithm was shown to perform better than several well known sparse recovery algorithms~\cite{Park2016Spatial}.

We test the performance of the proposed covariance estimation strategies using two metrics. The first metric is the efficiency metric $\eta$~\cite{Haghighatshoar2017Massive} that captures the similarity in the signal subspace of the true covariance and the estimated covariance. This metric is relevant in the current setup as the precoder/combiners are designed using the singular vectors that span the signal subspace. The efficiency metric is given as~\cite{Haghighatshoar2017Massive}
\begin{align}
\eta(\bsfR,\hat{\bsfR})=\frac{\tr(\bU_{\Ns}^\ast\hat{\bsfR}\bU_{\Ns}^{})}{\tr(\hat\bU_{\Ns}^\ast\bsfR\hat\bU_{\Ns})},
\label{eq:effmtc}
\end{align}
where $\bU_{\Ns}~(\hat \bU_{\Ns})$ are the $\Ns$ singular vectors of the matrix $\bsfR~(\hat{\bsfR})$ corresponding to the largest $\Ns$ singular values. It is clear that $0\leq\eta\leq1$ and it is desirable to make $\eta$ as close to $1$ as possible. 

The second metric is the achievable rate using the hybrid precoders/combiners designed from the covariance information. We assume that the channel covariance is constant over $T_{\rm{stat.}}$ OFDM blocks. Here, the subscript ${\rm{stat.}}$ signifies that the interval $T_{\rm{stat.}}$ is the time for which the statistics remain unchanged, and not the coherence time of the channel. In fact, the statistics vary slowly and typically $T_{\rm{stat.}}$ is much larger than the channel coherence time. If $T_{\rm{train.}}$ out of the $T_{\rm{stat.}}$ blocks are used in covariance estimation, $(1-\frac{T_{\rm{train.}}}{T_{\rm{stat.}}})$ is the fraction of blocks left for data transmission. With this, the effective achievable rate is estimated as~\cite{ElAyach2014Spatially,Alkhateeb2016Frequency}
\begin{align}
R&\!=\!\frac{(1-\frac{T_{\rm{train.}}}{T_{\rm{stat.}}})}{K}\sum_{k=1}^{K}\log_2\Big | \bI_{\Ns}\!+\!\frac{P}{K\Ns} \bsfR_{\bsfn}[k]^{-1}\bWBB^\ast[k]\bWRF^\ast\times\nonumber\\
&~~~~~~~~~~~~\bsfH[k]\bFRF\bFBB[k]\bFBB^\ast[k]\bFRF^\ast\bsfH^\ast[k] \bWRF\bWBB[k]\Big |,
\end{align}
where $\bsfR_{\bsfn}[k]=\sigma_{\bsfn}^2\bWBB[k]^\ast\bWRF\bWRF\bWBB[k]$ is the noise covariance matrix after combining.

The \subsGHz~system operates at $3.5\GHz$ with $150\MHz$ bandwidth, and $\ulNRX=8$, and $\ulNTX=4$ antennas. The mmWave system operates at $28~\GHz$ with $850\MHz$ bandwidth. The bandwidths are the maximum available bandwidths in the respective bands~\cite{3p5GHz,28GHz}. The mmWave system has $\NRX=64$, and $\NTX=32$ antennas, and $\MRX=16$, and $\MTX=8$ RF-chains. The number of transmitted streams is $\Ns=4$. Both systems use ULAs with half wavelength spacing $\ulDelta=\Delta=1/2$. With the chosen frequencies, the number of antennas, and inter-element spacing at \subsGHz~and mmWave, the \subsGHz~and mmWave antennas arrays have the same aperture. The transmission power for \subsGHz~is $\ul{P}=30~\dBm$ per $25~\MHz$~\cite{Commission2012Amendment}, and for mmWave is $P=43~\dBm$~\cite{Commission2016Fact}. The path-loss coefficient at \subsGHz~and mmWave is $3$ and the complex path coefficients of the channels are IID complex Normal.  The number of \subsGHz~OFDM sub-carriers is $\ul{K}=32$ and mmWave OFDM sub-carriers is $K=128$. The CP length is a quarter of the symbol duration for both \subsGHz~and mmWave. The number of delay-taps in \subsGHz~and mmWave is one more than the length of CP, i.e.,  $\ul{D}=9$ and $D=33$ delay-taps. A raised cosine filter with a roll-off factor of $1$ is used for pulse shaping. The MDL algorithm~\cite{Wax1985Detection} is used to estimate the number of clusters for covariance translation. A $2$x over-complete DFT basis. i.e., $B_{\rmR\rmX}=2\NRX$ and $B_{\rmT\rmX}=2\NTX$ is used for compressed covariance estimation. Two-bit phase-shifters based analog precoders/combiners are used for simulations are based on.

We start by considering a simple two-cluster channel, where each cluster contributes $100$ rays.  We assume that all the rays within a cluster arrive at the same time. We use Gaussian PAS with $\SI{3}{\degree}$ AS. To calculate the efficiency metric~\eqref{eq:effmtc}, we use the theoretical expressions of the covariance matrix with Gaussian PAS (see Table~\ref{tab:ThExp}) in~\eqref{eq:totalcovmatrixmmWave} as the true covariance. The TX-RX distance is $\SI{90}{\metre}$ and the number of snapshots for \subsGHz~covariance estimation is $30$. 

We present the results of covariance translation as a function of the separation between mean AoA of the clusters. Specifically, the mean AoA of one cluster is fixed at $\SI{5}{\degree}$ and the mean AoA of the second cluster is varied from $\SI{5}{\degree}$-$\SI{20}{\degree}$. The difference between the mean AoAs of the clusters is the separation in degrees. We assume that power contribution of the clusters is the same i.e., $\epsilon_1=\epsilon_2=0.5$. The time of arrival of the cluster at $\SI{5}{\degree}$ is fixed at $0$. For the other cluster, the time of arrival is chosen uniformly at random between $0$ to $\SI{10}{\nano\second}$. We plot the number of clusters estimated in the proposed translation strategy vs mean AoA separation in Fig.~\ref{fig:hatCSeparation}. Note that due to the robustification discussed earlier, it is possible that the final number of estimated clusters be different than the estimate provided by MDL. We are plotting the final number of estimated clusters. For small separations, effectively the channel has a single-cluster, and hence a single-cluster is estimated. As the separation increased, the algorithm can detect one or two clusters. With large enough separation, the algorithm successfully determines two clusters. The Fig.~\ref{fig:etaSeparation} shows the efficiency metric of the proposed strategy versus separation. When separation is below  $\SI{8}{\degree}$, and two clusters are detected, their AoA and AS estimation is erroneous due to small separation and the efficiency is low. As the separation increases, the AoA and AS estimation improves and with it the efficiency of the covariance translation approach.

\begin{figure}
\centering
\includegraphics[width=0.45\textwidth]{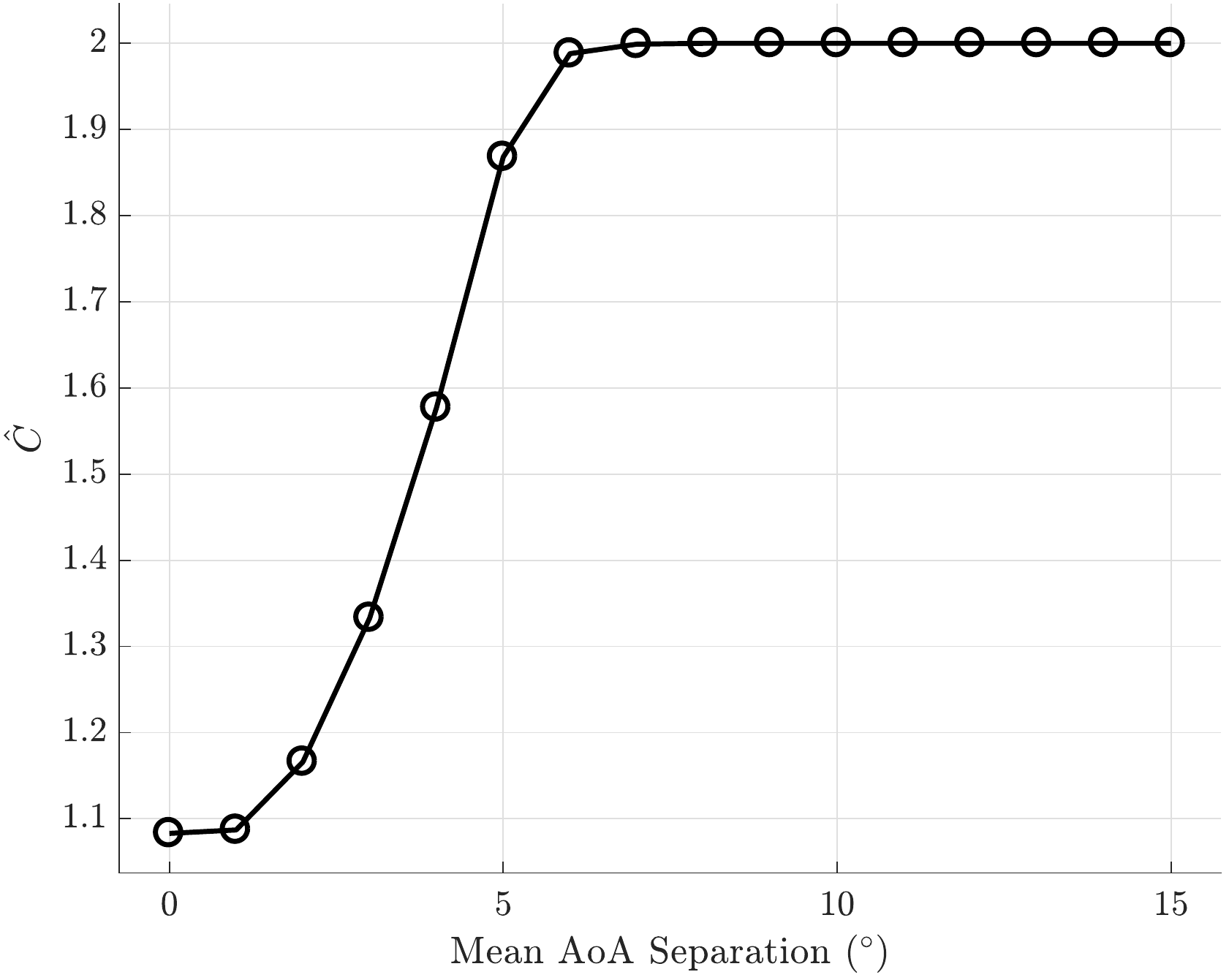}
\caption{The estimated number of clusters $\hat C$ (in a two-cluster channel) versus the mean AoA separation ($\SI{}{\degree}$) of the proposed covariance translation strategy. The mean AoA of the first cluster is $5^\circ$ and the mean AoA of the second cluster is varied from $5^\circ$ to $20^\circ$. The TX-RX distance is $\SI{90}{m}$.}
\label{fig:hatCSeparation}
\end{figure}

\begin{figure}
\centering
\includegraphics[width=0.45\textwidth]{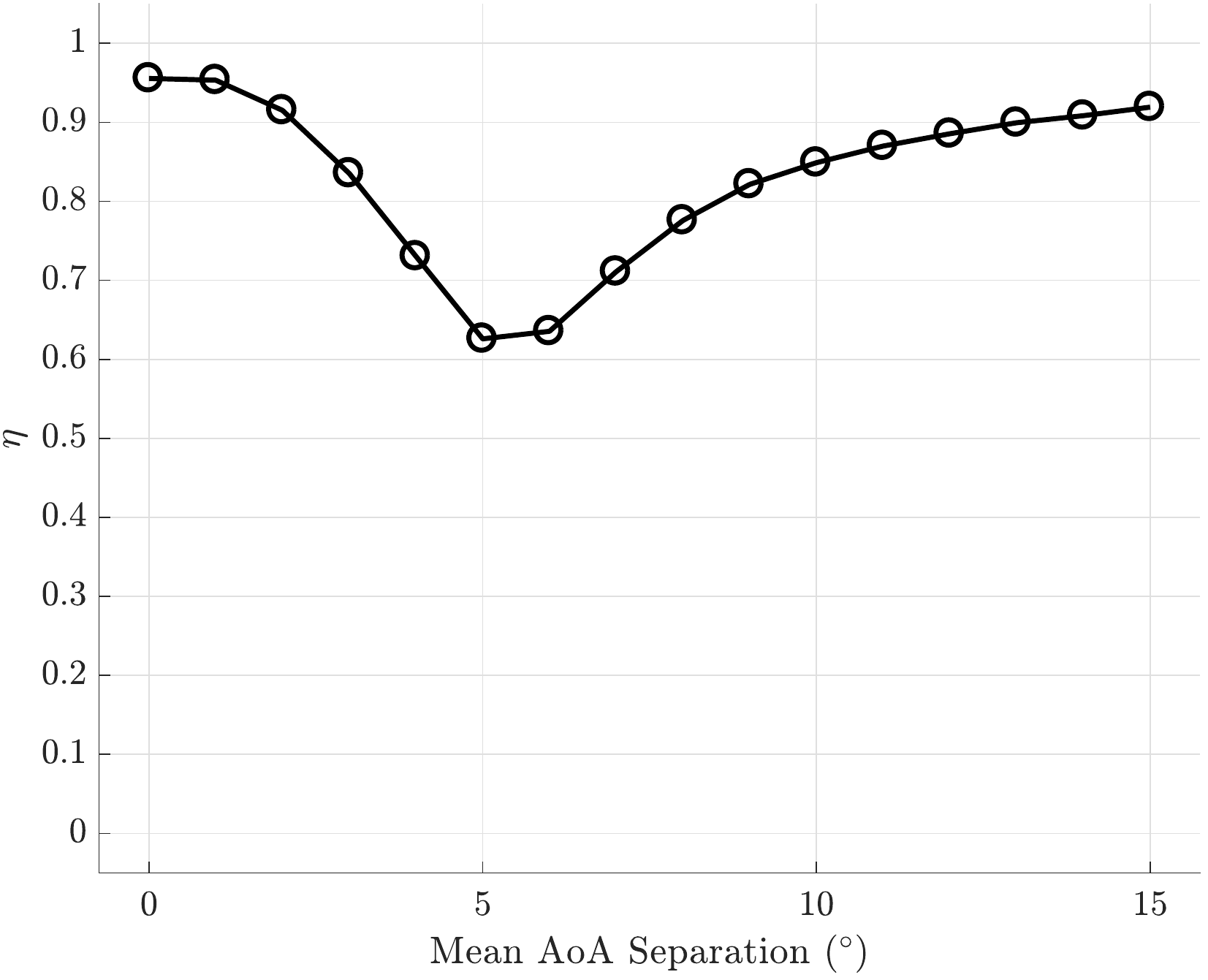}
\caption{The efficiency metric $\eta$ (in a two-cluster channel) versus the mean AoA separation ($\SI{}{\degree}$) of the proposed covariance translation strategy. The mean AoA of the first cluster is $5^\circ$ and the mean AoA of the second cluster is varied from $5^\circ$ to $20^\circ$. The TX-RX distance is $\SI{90}{m}$.}
\label{fig:etaSeparation}
\end{figure}

We test the performance of the proposed LW-DCOMP algorithm as a function of TX-RX distance. The number of snapshots is $30$. We fix the clusters at $5^\circ$ and $45^\circ$, and the cluster powers at $\epsilon_1=\epsilon_2=0.5$. The time of arrival of the cluster at $\SI{5}{\degree}$ is fixed at $0$, and the time of arrival of the cluster at $\SI{45}{\degree}$ is chosen uniformly at random between $0$ to $\SI{10}{\nano\second}$. The results of this experiment are shown in Fig.~\ref{fig:etaTXRXdist}. We see that as the TX-RX separation increases - and the \SNR~decreases - the benefit of using out-of-band information in compressed covariance estimation becomes clear. 

\begin{figure}
\centering
\includegraphics[width=0.45\textwidth]{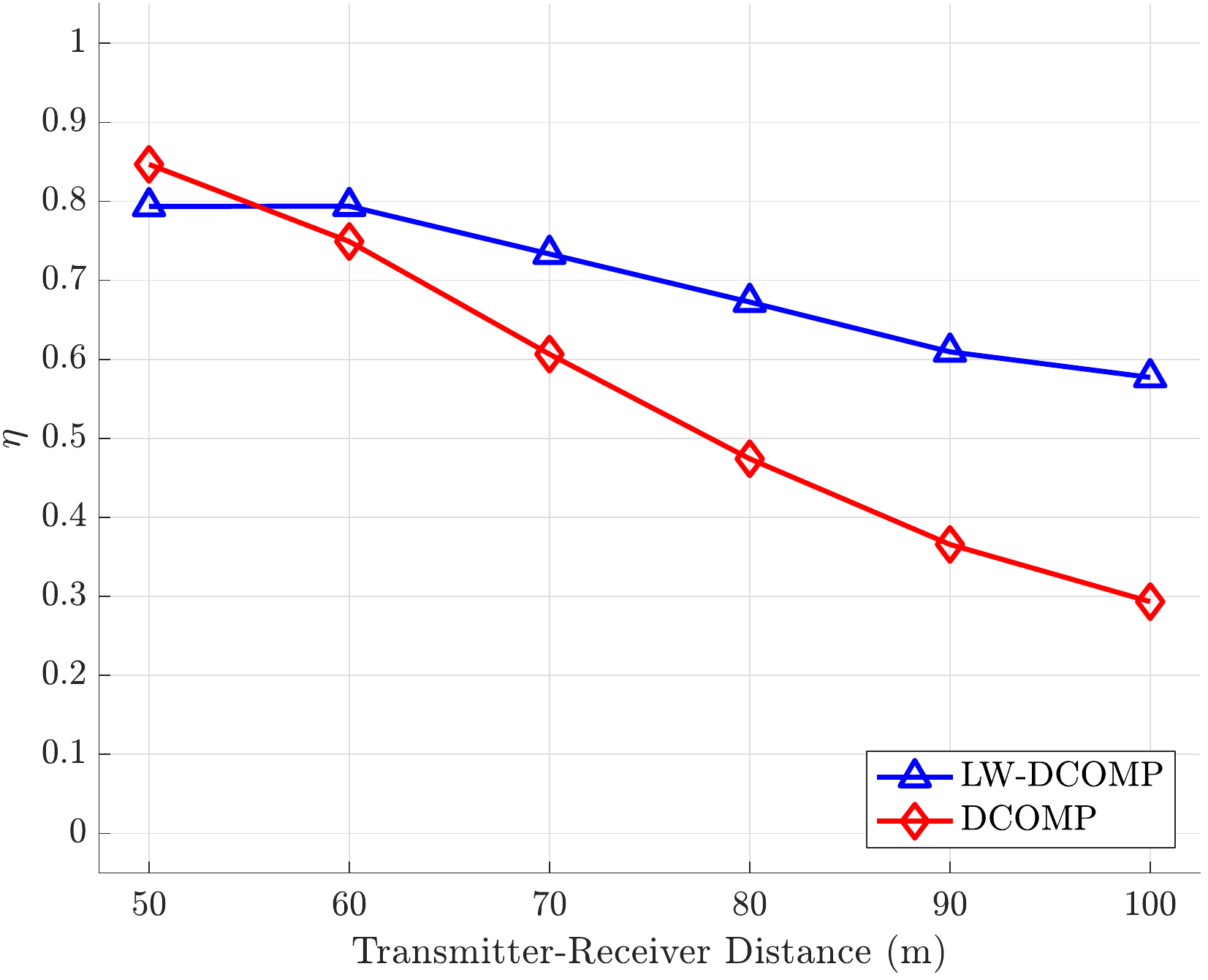}
\caption{The efficiency metric $\eta$ (in a two-cluster channel) of the the proposed LW-DCOMP algorithm versus the TX-RX distance ($\SI{}{\metre}$). First cluster has mean AoA $5^\circ$ and the second cluster has a mean AoA $45^\circ$. The number of snapshots $T$ is $30$. }
\label{fig:etaTXRXdist}
\end{figure}

So far we did not consider the spatial discrepancy in the \subsGHz~and mmWave channels. We now test the performance of the proposed strategies in more realistic channels, i.e., where \subsGHz~and mmWave systems have a mismatch. Specifically, we generate the channels according to the methodology proposed in~\cite{Ali2018Millimeter}. We refer the interested reader to~\cite{Ali2018Millimeter} for the details of the method to generate \subsGHz~and mmWave channels. Here, we only give the channel parameters. The \subsGHz~and mmWave channels have $\ul{C}=10$ and $C=5$ clusters respectively, each contributing $\ul{R}_{\rmc}=R_{\rmc}=20$ rays. The mean angles of the clusters are limited to $[-\frac{\pi}{3},\frac{\pi}{3})$. The relative angle shifts come from a wrapped Gaussian distribution with with AS $\{\ul{\sigma}_{\vartheta_{c}},\ul{\sigma}_{\varphi_{c}}\}=4^\circ$ and $\{\sigma_{\vartheta_c},\sigma_{\varphi_c}\}=2^\circ$
As the delay spread of \subsGHz~channel is expected to be larger than the delay spread of mmWave~\cite{weiler2015simultaneous,Poon2003Indoor,Jaeckel20165G,Kaya201628}, we choose $\ul{\tau}_{\mathrm{RMS}}\approx 3.8~\ns$ and $\tau_{\mathrm{RMS}}\approx 2.7~\ns$. The relative time delays of the paths within the clusters are drawn from zero mean Normal distributions with RMS AS $\ul{\sigma}_{\ul{\tau}_{r_{\ul{c}}}}=\frac{\ul{\tau}_{\mathrm{RMS}}}{10}$ and $\sigma_{\tau_{r_c}}=\frac{\tau_{\mathrm{RMS}}}{10}$. The powers of the clusters are drawn from exponential distributions. Specifically, the exponential distribution with parameter $\mu$ is defined as $f(x|\mu)=\frac{1}{\mu}e^{-\frac{x}{\mu}}$. The parameter for \subsGHz~was chosen as $\ul{\mu}=0.2$ and for mmWave $\mu=0.1$. This implies that the power in late arriving multi-paths for mmWave will decline more rapidly than~\subsGHz. The system parameters are identical to the previously explained setup. The hybrid precoders/combiners are designed using the greedy algorithm given in the~\cite{Alkhateeb2013Hybrid}. The effective achievable rate results are shown in Fig.~\ref{fig:Res_Rate_dist}. As the \subsGHz~CSI is obtained for establishing the \subsGHz~link, it does not cause any training overhead for mmWave link establishment. Therefore, we do not consider training overhead for computing the effective rate of covariance translation approach. For compressed covariance estimation and weighted compressed covariance estimation, we assume that $T_{\rm{stat.}}=2048$. The number of training OFDM blocks is $T_{\rm{train.}}=2\times2\times T$. Here, $T$ is the number of snapshots. A factor of $2$ appears as we use $2$ OFDM blocks to create omni-directional transmission, i.e., one snapshot. Another factor of $2$ appears as the training is performed for the transmit and the receive covariance estimation. The observations about the benefit of using out-of-band information in mmWave covariance estimation also hold in this experiment. Note that, the achievable rate drops with increasing TX-RX distance due to decreasing \SNR. Further, this experiment validates the robustness of the designed covariance estimation strategies to the correlated channel taps case.

\begin{figure}
\centering
\includegraphics[width=0.45\textwidth]{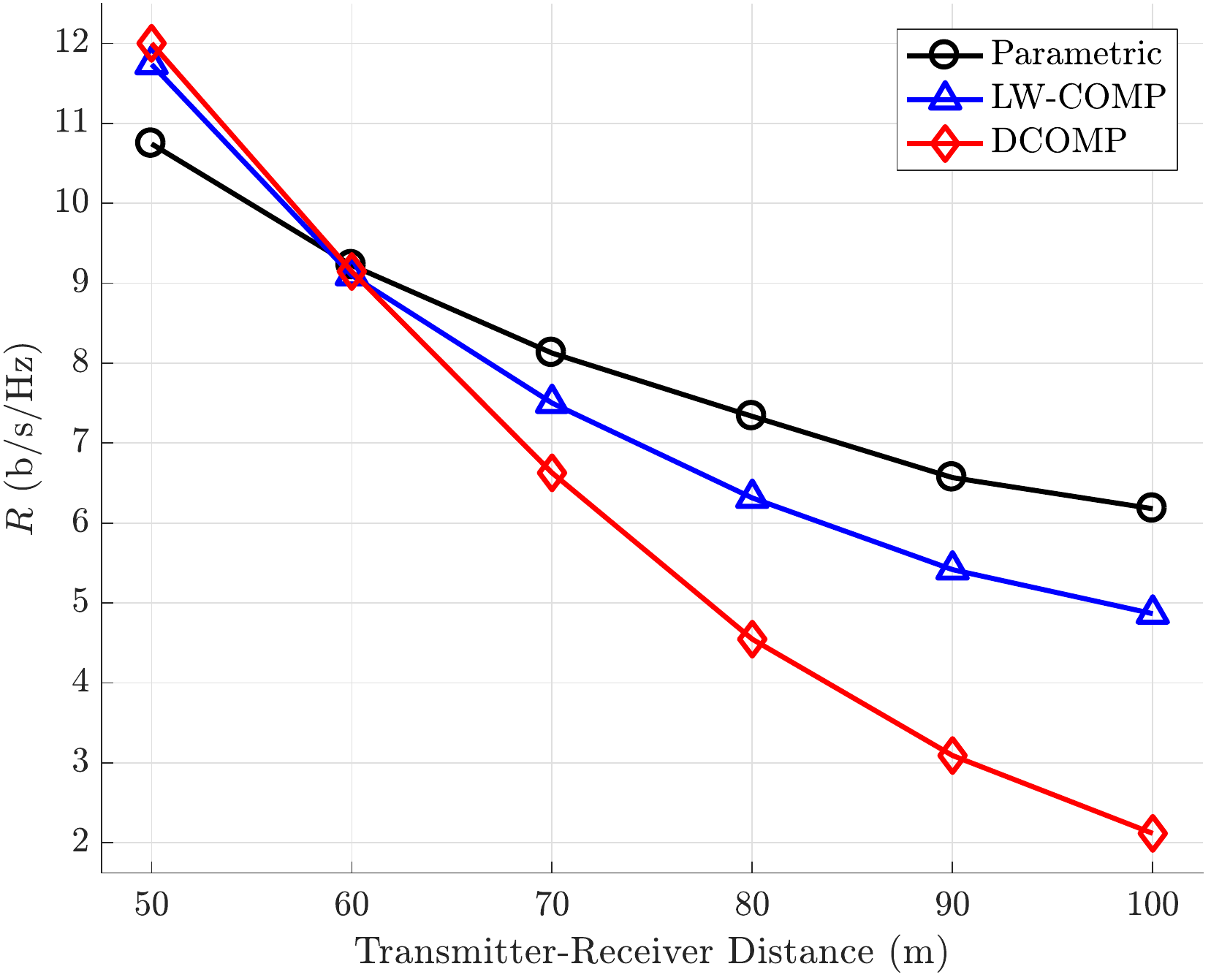}
\caption{The effective achievable rate of the proposed covariance estimation strategies versus the TX-RX distance. The rate calculations are based on $T_{\rm{stat.}}=2048$ blocks and $T_{\rm{train.}}=120$ blocks.}
\label{fig:Res_Rate_dist}
\end{figure}

We now compare the overhead of the proposed LW-DCOMP approach to the DCOMP approach. We use $T_{\rm{train.}}=2\times 2\times T$ for rate calculations. In Fig.~\ref{fig:Res_Rate_snapshots}, we plot the effective achievable rate versus the number of snapshots $T$ for three different values of $T_{\rm stat.}$. For dynamic channels, i.e., with $T_{\rm{stat.}}=1024$ or $T_{\rm{stat.}}=2048$, the effective rate of both LW-DCOMP and DCOMP increases with snapshots, but as we keep on increasing $T$, the rate starts to decrease. This is because, though the channel estimation quality increases, a significant fraction of the $T_{\rm{stat.}}$ is spent training and the thus there is less time to use the channel for data-transmission. Taking $T_{\rm{stat.}}=2048$ as an example, the highest rate of DCOMP algorithm is $7.16~\rm{b/s/Hz}$ and is achieved with $45$ snapshots. In comparison, the optimal rate of the LW-DCOMP algorithm is $7.46~\rm{b/s/Hz}$ and is achieved with only $25$ snapshots. The LW-DCOMP achieves a rate better than the highest rate of DCOMP algorithm ($7.16~\rm{b/s/Hz}$) with less than $15$ snapshots. Thus, the LW-DCOMP can reduce the training overhead of DCOMP by over $3$x. 
\begin{figure}
\centering
\includegraphics[width=0.45\textwidth]{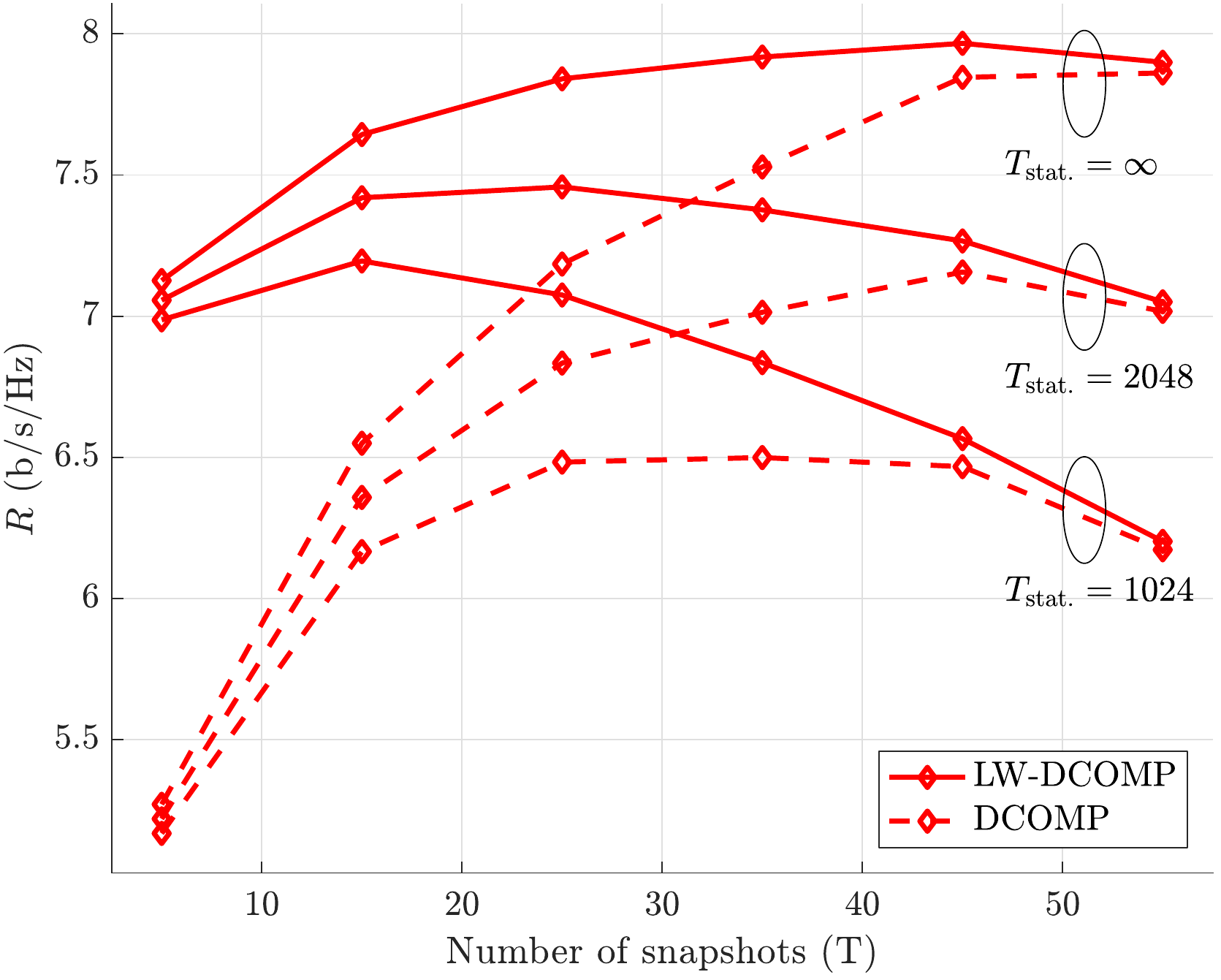}
\caption{The effective achievable rate of LW-DCOMP and DCOMP versus the number of snapshots $T$ (at the transmitter and the receiver). The effective rate is plotted for three values of $T_{\rm{stat.}}$. The TX-RX distance is fixed at $\SI{70}{\meter}$.}
\label{fig:Res_Rate_snapshots}
\end{figure}


Now we verify the \SNR~loss analysis outlined in Section~\ref{sec:analysis}. For this purpose, we consider two mmWave systems with $\NTX=\NRX=64$ and $\NTX=\NRX=16$ antennas. The number of RF chains in both cases is $\MTX=\MRX=\sqrt{\NTX}=\sqrt{\NRX}$. We plot the loss in $\SNR$ $\gamma$ as a function of the per-subcarrier \SNR~i.e., $\SNR_k=\frac{P}{\sigma_{\bsfn}^2 K}$. We assume that $[\delbsfRRX]_{i,j}=[\delbsfRTX]_{i,j}\sim\cC\cN(0,\frac{1}{{\SNR_k}})$. The smallest singular value of the Gaussian matrices vanishes as the dimensions increase~\cite{Vershynin2010Introduction}, and the lower bound becomes trivial. As such, we only show the results for the upper bound in Fig.~\ref{fig:Res_eta_SNR}. We show the upper bound as predicted by the analysis and empirical difference in the average $\SNR$ of the mmWave systems based on true covariance and the perturbed covariance. The empirical difference is plotted for the case when the singular vectors are used as precoder/combiner (i.e., assuming fully digital precoding/combining) and also for the case when hybrid precoders/combiners are used. From the results, we can see that the upper bound is valid for both systems with $16$ and $64$ antennas respectively. An interesting observation is that when the hybrid precoders/combiners are used in the mmWave system, the loss due to the mismatch in the estimated and true covariance is less than the case when fully digital precoding and combining is used.

\begin{figure}
\centering
\includegraphics[width=0.45\textwidth]{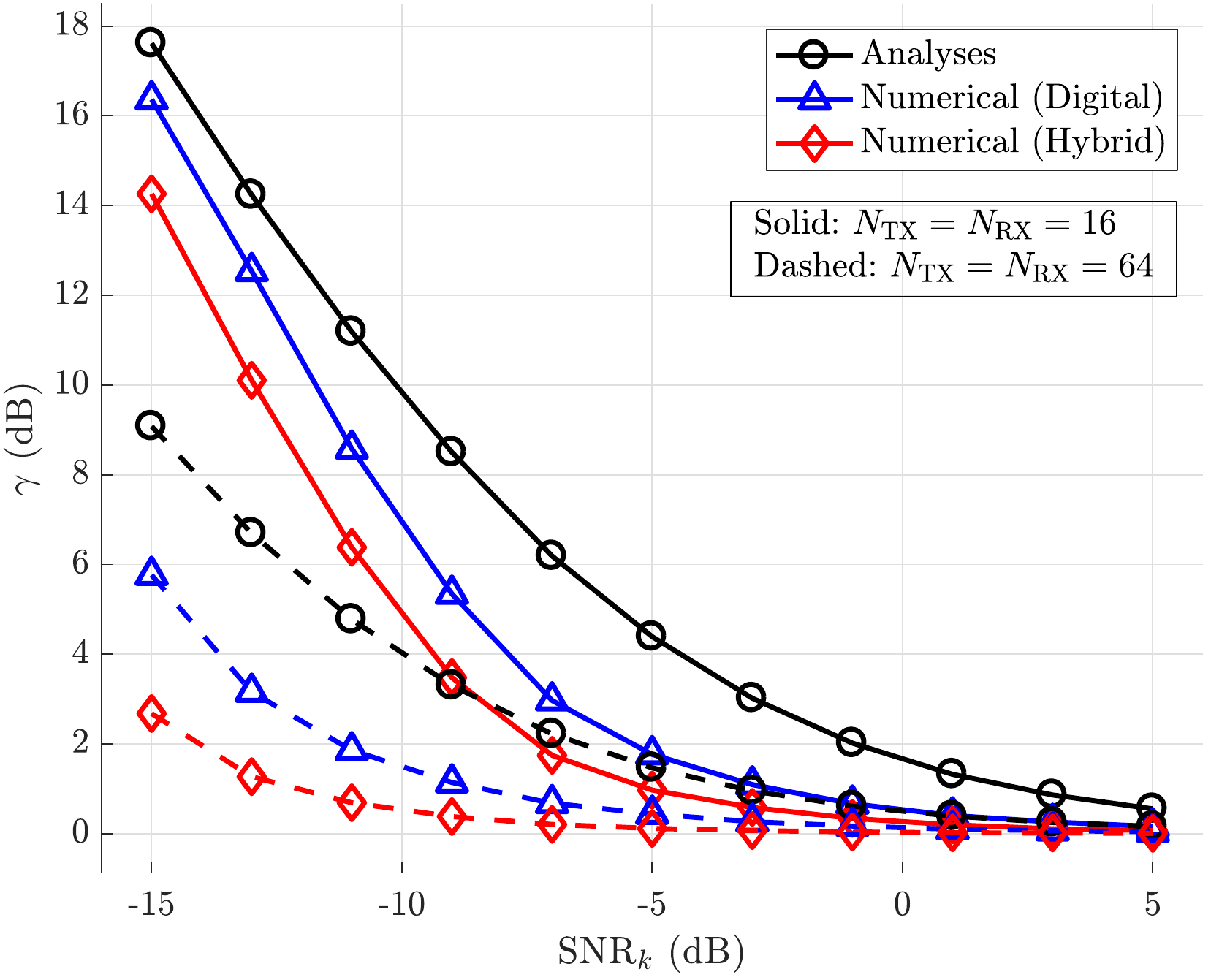}
\caption{The upper bound on the $\SNR$ loss $\gamma$ with complex Normal perturbation.    The number of RF chains is $\MTX=\MRX=\sqrt{\NTX}=\sqrt{\NRX}$ and $\SNR_k=\frac{P}{\sigma_{\bsfn}^2 K}$.}
\label{fig:Res_eta_SNR}
\end{figure}
\section{Conclusion}\label{sec:conc}
In this paper, we used the \subsGHz~covariance to predict the mmWave~covariance. We presented a parametric approach that relies on the estimates of mean angle and angle spread and their subsequent use in theoretical expressions of the covariance pertaining to a postulated power azimuth spectrum. To aid the in-band compressed covariance estimation with out-of-band information, we formulated the compressed covariance estimation problem as weighted compressed covariance estimation. For a single path channel, we bounded the loss in \SNR~caused by imperfect covariance estimation using singular-vector perturbation theory. 

The out-of-band covariance translation and out-of-band aided compressed covariance estimation had better effective achievable rate than in-band only training, especially in low \SNR~scenarios. The out-of-band covariance translation eliminated the in-band training but performed poorly when the \SNR~was favorable. The out-of-band aided compressed covariance estimation reduced the training overhead of the in-band only covariance estimation by $3$x. As a future work, covariance estimation approaches for array geometries other than uniform linear arrays, e.g., circular and planar arrays should be explored.
\bibliographystyle{IEEEtran}
\section*{Appendix A\\Proof of Theorem 1}
The received signal~\eqref{eq:rxdestcov} can be written as
\begin{align}
\sfy&=\frac{\bURXs^\ast \bsfH \bUTXs}{\|\hbURXs\|\|\hbUTXs\|} \sfs+\frac{\delURXs^\ast \bsfH \bUTXs}{\|\hbURXs\|\|\hbUTXs\|}\sfs\nonumber\\
&+\frac{\bURXs^\ast \bsfH \delUTXs}{\|\hbURXs\|\|\hbUTXs\|}\sfs+\frac{\delURXs^\ast \bsfH \delUTXs}{\|\hbURXs\|\|\hbUTXs\|}\sfs+\frac{\hbURXs^\ast}{\|\hbURXs\|}\bsfn.
\label{eq:rxdexpanded}
\end{align}
The numerator of the first term on the RHS is identical to the first term in~\eqref{eq:rxdtruecov} and can be simplified as
\begin{align}
\frac{\bURXs^\ast \bH \bUTXs}{\|\hbURXs\|\|\hbUTXs\|}\sfs=\frac{\sqrt{\NRX\NTX}\alpha \sfs}{\|\hbURXs\|\|\hbUTXs\|}.
\end{align}
Now using the channel representation in form of the signal subspace $\bsfH=\sqrt{\NRX\NTX}\alpha\bURXs \bURXs^\ast$, the second term can be written as
\begin{align}
\frac{\delURXs^\ast \bsfH \bUTXs}{\|\hbURXs\|\|\hbUTXs\|}\sfs=\frac{\sqrt{\NRX\NTX}\alpha\delURXs^\ast \bURXs \bURXs^\ast \bUTXs}{\|\hbURXs\|\|\hbUTXs\|}\sfs.
\end{align}
Using the results from~\cite{Liu2008First} we can write $\delURXs$ as
\begin{align}
\delURXs=\frac{\bURXn\bURXn^\ast \delbsfRRX \bURXs}{\sigma_\alpha^2\NRX}.
\label{eq:delURXsdef}
\end{align}
and further
\begin{align}
\delURXs^\ast\bURXs=\frac{\bURXs^\ast \delbsfRRX^\ast \bURXn \bURXn^\ast \bURXs}{\sigma_\alpha^2\NRX}.
\label{eq:delurxnurxs}
\end{align}
In~\eqref{eq:delurxnurxs}, $\bURXn^\ast \bURXs=\bzero$, and hence the second term in~\eqref{eq:rxdexpanded} is zero. The third term and the fourth term in~\eqref{eq:rxdexpanded} vanish by the same argument. Hence the received signal can be simply written as
\begin{align}
\sfy=\frac{\sqrt{\NRX\NTX}\alpha \sfs}{\|\hbURXs\|\|\hbUTXs\|}+\frac{\hbURXs^\ast}{\|\hbURXs\|}\bsfn,
\end{align}
from which \SNR~expression~\eqref{eq:SNRest} can be obtained.

\section*{Appendix B\\Proof of Theorem 2}
We work solely on simplifying $\hbURXs$ on the RHS of~\eqref{eq:snrlosssigsub} as the simplification of $\hbUTXs$ is analogous. We can write
\begin{align}
\|\hbURXs\|^2&\overset{(a)}{=}\|\bURXs+\delURXs\|^2, \nonumber\\
&\overset{(b)}{=}\|\bURXs\|^2+\|\delURXs\|^2,\nonumber\\
&\overset{(c)}{=}1+\frac{1}{\sigma_\alpha^4\NRX^2}\left\|\bURXn\bURXn^\ast \delbsfRRX \bURXs\right \|^2,\nonumber\\
&\overset{(d)}{\approx}1+\frac{1}{\sigma_\alpha^4\NRX^2}\left\|\delbsfRRX \bURXs\right \|^2,
\end{align}
where $(a)$ comes from the definition of $\hbURXs$ and $(b)$ comes from the assumption that the phase of the perturbation is adjusted to have the true signal subspace and the perturbation signal subspace orthogonal~\cite{Liu2008First}. In $(c)$ the first term simplifies to $1$ as the norm of the singular vector and the second term comes from the definition of $\delURXs$ in~\eqref{eq:delURXsdef}. Finally in $(d)$ we use the approximation $\bURXn\bURXn^\ast\approx \bI$. Note that for single path channels, the signal space is one dimensional and $\frac{\|\bURXn\bURXn^\ast-\bI_{\NRX}\|_\rmF^2}{\|\bI_{\NRX}\|_\rmF^2}=\frac{1}{\NRX}$. Hence $\frac{\|\bURXn\bURXn^\ast-\bI_{\NRX}\|_\rmF^2}{\|\bI_{\NRX}\|_\rmF^2}\rightarrow 0$ as $\NRX \rightarrow \infty$ and the approximation is exact in the limit. For mmWave systems where the number of antennas is typically large, the approximation is fair. Following an analogous derivation for $\hbUTXs$, we get~\eqref{eq:snrlossapprox}.

To obtain the upper and lower bounds, note the following about the norm of a matrix-vector product $\bA\bx$: 
$\max_{\|\bx\|_2=1} \|\bA\bx\|^2 = \sigma_{\max}^2(\bA)$ and $\min_{\|\bx\|_2=1} \|\bA\bx\|^2 = \sigma_{\min}^2(\bA)$,
where $\sigma_{\max}(\bA)$ and $\sigma_{\min}(\bA)$ is the largest and smallest singular value of the matrix $\bA$. As in~\eqref{eq:snrlossapprox}, $\bURXs$ is a singular vector with unit norm, we can bound $\sigma_{\min}(\delbsfRRX) \leq \|\delbsfRRX \bURXs\| \leq\sigma_{\max}(\delbsfRRX)$. Using this result in~\eqref{eq:snrlossapprox}, we get~\eqref{eq:snrlossupper} and~\eqref{eq:snrlosslower}.
\bibliography{\centrallocation/Abbr,\centrallocation/Master_Bibliography}{}

\begin{thebibliography}{10}
\providecommand{\url}[1]{#1}
\csname url@samestyle\endcsname
\providecommand{\newblock}{\relax}
\providecommand{\bibinfo}[2]{#2}
\providecommand{\BIBentrySTDinterwordspacing}{\spaceskip=0pt\relax}
\providecommand{\BIBentryALTinterwordstretchfactor}{4}
\providecommand{\BIBentryALTinterwordspacing}{\spaceskip=\fontdimen2\font plus
\BIBentryALTinterwordstretchfactor\fontdimen3\font minus
  \fontdimen4\font\relax}
\providecommand{\BIBforeignlanguage}[2]{{%
\expandafter\ifx\csname l@#1\endcsname\relax
\typeout{** WARNING: IEEEtran.bst: No hyphenation pattern has been}%
\typeout{** loaded for the language `#1'. Using the pattern for}%
\typeout{** the default language instead.}%
\else
\language=\csname l@#1\endcsname
\fi
#2}}
\providecommand{\BIBdecl}{\relax}
\BIBdecl

\bibitem{Ali2016Estimating}
A.~Ali, N.~Gonz{\'a}lez-Prelcic, and R.~W. Heath~Jr., ``{Estimating millimeter
  wave channels using out-of-band measurements},'' in \emph{Proc. Inf. Theory
  Appl. {(ITA)} Wksp}, Feb. 2016, pp. 1--5.

\bibitem{ElAyach2014Spatially}
O.~El~Ayach \emph{et~al.}, ``{Spatially sparse precoding in millimeter wave
  MIMO systems},'' \emph{{IEEE} Trans. Wireless Commun.}, vol.~13, no.~3, pp.
  1499--1513, Mar. 2014.

\bibitem{Alkhateeb2013Hybrid}
A.~Alkhateeb \emph{et~al.}, ``{Hybrid precoding for millimeter wave cellular
  systems with partial channel knowledge},'' in \emph{Proc. Inf. Theory Appl.
  {(ITA)} Wksp}, Feb. 2013, pp. 1--5.

\bibitem{Mo2015Capacity}
J.~Mo and R.~W. Heath~Jr., ``{Capacity analysis of one-bit quantized MIMO
  systems with transmitter channel state information},'' \emph{{IEEE} Trans.
  Signal Process.}, vol.~63, no.~20, pp. 5498--5512, 2015.

\bibitem{kishiyama2013future}
Y.~Kishiyama \emph{et~al.}, ``{Future steps of LTE-A: Evolution toward
  integration of local area and wide area systems},'' \emph{{IEEE} Wireless
  Commun.}, vol.~20, no.~1, pp. 12--18, Feb. 2013.

\bibitem{daniels2007multi}
R.~C. Daniels and R.~W. Heath~Jr., ``{Multi-band modulation, coding, and medium
  access control},'' {IEEE 802.11-07/2780R1}, pp. 1--18, Nov. 2007.

\bibitem{Hashemi2018Out}
M.~Hashemi, C.~E. Koksal, and N.~B. Shroff, ``{Out-of-band millimeter wave
  beamforming and communications to achieve low latency and high energy
  efficiency in 5G systems},'' \emph{{IEEE} Trans. Commun.}, vol.~66, no.~2,
  pp. 875--888, 2018.

\bibitem{Pi2011introduction}
Z.~Pi and F.~Khan, ``{An introduction to millimeter-wave mobile broadband
  systems},'' \emph{{IEEE} Commun. Mag.}, vol.~49, no.~6, pp. 101--107, Jun.
  2011.

\bibitem{Bai2015Coverage}
T.~Bai and R.~W. Heath~Jr., ``{Coverage and rate analysis for millimeter-wave
  cellular networks},'' \emph{{IEEE} Trans. Wireless Commun.}, vol.~14, no.~2,
  pp. 1100--1114, Feb. 2015.

\bibitem{2016Federal}
\BIBentryALTinterwordspacing
F.~C. Commission, ``{Spectrum frontiers R\&0 and FNPRM: FCC16-89.}'' Jul. 2016.
  [Online]. Available:
  \url{https://apps.fcc.gov/edocs_public/attachmatch/FCC-16-89A1_Rcd.pdf}
\BIBentrySTDinterwordspacing

\bibitem{choi2016millimeter}
J.~Choi \emph{et~al.}, ``{Millimeter wave vehicular communication to support
  massive automotive sensing},'' \emph{{IEEE} Commun. Mag.}, vol.~54, no.~12,
  pp. 160--167, Dec. 2016.

\bibitem{va2016impact}
V.~Va, J.~Choi, and R.~W. Heath~Jr, ``{The impact of beamwidth on temporal
  channel variation in vehicular channels and its implications},'' \emph{{IEEE}
  Trans. Veh. Technol.}, vol.~66, no.~6, pp. 5014--5029, Jun. 2017.

\bibitem{li2010overview}
Y.~J. Li, ``{An overview of the DSRC/WAVE technology},'' in \emph{Proc. Qual.,
  Rel., Security Robustness Heterogeneous Netw.}\hskip 1em plus 0.5em minus
  0.4em\relax Springer, Nov. 2010, pp. 544--558.

\bibitem{Scarlett2013Compressed}
J.~Scarlett, J.~S. Evans, and S.~Dey, ``{Compressed sensing with prior
  information: Information-theoretic limits and practical decoders},''
  \emph{{IEEE} Trans. Signal Process.}, vol.~61, no.~2, pp. 427--439, Jan 2013.

\bibitem{Liu2008First}
J.~Liu, X.~Liu, and X.~Ma, ``{First-order perturbation analysis of singular
  vectors in singular value decomposition},'' \emph{{IEEE} Trans. Signal
  Process.}, vol.~56, no.~7, pp. 3044--3049, 2008.

\bibitem{Aste1998Downlink}
T.~Ast{\'e} \emph{et~al.}, ``{Downlink beamforming avoiding DOA estimation for
  cellular mobile communications},'' in \emph{Proc. {IEEE} Int. Conf. Acoust.,
  Speech Signal Process. {(ICASSP)}}, May 1998, pp. 3313--3316.

\bibitem{Liang2001Downlink}
Y.-C. Liang and F.~P. Chin, ``{Downlink channel covariance matrix (DCCM)
  estimation and its applications in wireless DS-CDMA systems},'' \emph{{IEEE}
  J. Sel. Areas Commun.}, vol.~19, no.~2, pp. 222--232, Feb. 2001.

\bibitem{Chalise2004Robust}
B.~K. Chalise, L.~Haering, and A.~Czylwik, ``{Robust uplink to downlink spatial
  covariance matrix transformation for downlink beamforming},'' in \emph{Proc.
  {IEEE} Int. Conf. Commun. {(ICC)}}, vol.~5, Jun. 2004, pp. 3010--3014.

\bibitem{Hugl1999Downlink}
K.~Hugl, J.~Laurila, and E.~Bonek, ``{Downlink beamforming for frequency
  division duplex systems},'' in \emph{Proc. {IEEE} Glob. Telecom. Conf.
  {(GLOBECOM)}}, vol.~4, Dec. 1999, pp. 2097--2101.

\bibitem{Jordan2009Conversion}
M.~Jordan, X.~Gong, and G.~Ascheid, ``{Conversion of the spatio-temporal
  correlation from uplink to downlink in FDD systems},'' in \emph{Proc. {IEEE}
  Wireless Commun. Netw. Conf. {(WCNC)}}, Apr. 2009, pp. 1--6.

\bibitem{Decurninge2015Channel}
A.~Decurninge, M.~Guillaud, and D.~T. Slock, ``{Channel covariance estimation
  in massive MIMO frequency division duplex systems},'' in \emph{Proc. {IEEE}
  Glob. Telecom. Conf. {(GLOBECOM)}}, Dec. 2015, pp. 1--6.

\bibitem{Miretti2018FDD}
L.~Miretti, R.~L. Cavalcante, and S.~Stanczak, ``{FDD massive MIMO channel
  spatial covariance conversion using projection methods},'' \emph{arXiv
  preprint arXiv:1804.04850}, 2018.

\bibitem{Vasisht2016Eliminating}
D.~Vasisht \emph{et~al.}, ``{Eliminating channel feedback in next-generation
  cellular networks},'' in \emph{Proc. ACM SIGCOMM}, Aug. 2016, pp. 398--411.

\bibitem{Shen2016Compressed}
J.~Shen \emph{et~al.}, ``{Compressed CSI acquisition in FDD massive MIMO: How
  much training is needed?}'' \emph{{IEEE} Trans. Wireless Commun.}, vol.~15,
  no.~6, pp. 4145--4156, Jun. 2016.

\bibitem{HuglSpatial}
K.~Hugl, K.~Kalliola, and J.~Laurila, ``{Spatial reciprocity of uplink and
  downlink radio channels in FDD systems},'' May, {COST} 273 Technical Document
  {TD}(02) 066, 2002.

\bibitem{Raghavan2016Low}
\BIBentryALTinterwordspacing
V.~Raghavan \emph{et~al.}, ``{Low-frequency assisted methods for beamforming,
  timing and frequency offset in mm-wave access systems},'' May 2016, {US
  Patent 9,337,969}. [Online]. Available:
  \url{https://www.google.com/patents/US9337969}
\BIBentrySTDinterwordspacing

\bibitem{Nitsche2015Steering}
T.~Nitsche \emph{et~al.}, ``{Steering with eyes closed: mm-wave beam steering
  without in-band measurement},'' in \emph{Proc. {IEEE} Int. Conf. Comput.
  Commun. {(INFOCOM)}}, Apr. 2015, pp. 2416--2424.

\bibitem{Ali2018Millimeter}
A.~Ali, N.~Gonz{\'a}lez-Prelcic, and R.~W. Heath~Jr., ``{Millimeter wave
  beam-selection using out-of-band spatial information},'' \emph{{IEEE} Trans.
  Wireless Commun.}, vol.~17, no.~2, pp. 1038--1052, 2018.

\bibitem{Gonzalez-Prelcic2016Radar}
N.~Gonz{\'a}lez-Prelcic, R.~M{\'e}ndez-Rial, and R.~W. Heath~Jr., ``{Radar
  aided beam alignment in mmwave V2I communications supporting antenna
  diversity},'' in \emph{Proc. Inf. Theory Appl. {(ITA)} Wksp}, Feb. 2016, pp.
  1--5.

\bibitem{Alkhateeb2016Frequency}
A.~Alkhateeb and R.~W. Heath~Jr., ``{Frequency selective hybrid precoding for
  limited feedback millimeter wave systems},'' \emph{{IEEE} Trans. Commun.},
  vol.~64, no.~5, pp. 1801--1818, May 2016.

\bibitem{Bjornson2009Exploiting}
E.~Bjornson, D.~Hammarwall, and B.~Ottersten, ``{Exploiting quantized channel
  norm feedback through conditional statistics in arbitrarily correlated MIMO
  systems},'' \emph{{IEEE} Trans. Signal Process.}, vol.~57, no.~10, pp.
  4027--4041, 2009.

\bibitem{Czink2005Improved}
N.~Czink \emph{et~al.}, ``{Improved MMSE estimation of correlated MIMO channels
  using a structured correlation estimator},'' in \emph{Proc. {IEEE} Int.
  Workshop Signal Process. Adv. Wireless Commun. {(SPAWC)}}, Jun. 2005, pp.
  595--599.

\bibitem{Bengtsson2000Low}
M.~Bengtsson and B.~Ottersten, ``{Low-complexity estimators for distributed
  sources},'' \emph{{IEEE} Trans. Signal Process.}, vol.~48, no.~8, pp.
  2185--2194, Aug. 2000.

\bibitem{Forenza2007Simplified}
A.~Forenza, D.~J. Love, and R.~W. Heath~Jr., ``{Simplified spatial correlation
  models for clustered MIMO channels with different array configurations},''
  \emph{{IEEE} Trans. Veh. Technol.}, vol.~56, no.~4, pp. 1924--1934, Jul.
  2007.

\bibitem{Wax1985Detection}
M.~Wax and T.~Kailath, ``{Detection of signals by information theoretic
  criteria},'' \emph{{IEEE} Trans. Acoust., Speech, Signal Process.}, vol.~33,
  no.~2, pp. 387--392, Apr. 1985.

\bibitem{Akaike1974new}
H.~Akaike, ``{A new look at the statistical model identification},''
  \emph{{IEEE} Trans. Autom. Control}, vol.~19, no.~6, pp. 716--723, Dec. 1974.

\bibitem{Trump1996Estimation}
T.~Trump and B.~Ottersten, ``{Estimation of nominal direction of arrival and
  angular spread using an array of sensors},'' \emph{Signal Process.}, vol.~50,
  no.~1, pp. 57--69, Apr. 1996.

\bibitem{Ottersten1998Covariance}
B.~Ottersten, P.~Stoica, and R.~Roy, ``{Covariance matching estimation
  techniques for array signal processing applications},'' \emph{Digital Signal
  Process.}, vol.~8, no.~3, pp. 185--210, Jul. 1998.

\bibitem{Barabell1983Improving}
A.~J. Barabell, ``{Improving the resolution performance of eigenstructure-based
  direction-finding algorithms},'' in \emph{Proc. {IEEE} Int. Conf. Acoust.,
  Speech Signal Process. {(ICASSP)}}, vol.~8, Apr. 1983, pp. 336--339.

\bibitem{Mendez-Rial2015Adaptive}
R.~M{\'e}ndez-Rial, N.~Gonz{\'a}lez-Prelcic, and R.~W. Heath~Jr., ``{Adaptive
  hybrid precoding and combining in mmWave multiuser MIMO systems based on
  compressed covariance estimation},'' in \emph{Proc. {IEEE} Int. Workshop
  Comput. Adv. Multi-Sensor Adapt. Process. {(CAMSAP)}}, Dec. 2015, pp.
  213--216.

\bibitem{Park2016Spatial}
S.~Park and R.~W. Heath~Jr., ``{Spatial channel covariance estimation for
  mmWave hybrid MIMO architecture},'' in \emph{Proc. Asilomar Conf. Signals,
  Syst. Comput. {(ASILOMAR)}}, Nov. 2016, pp. 1424--1428.

\bibitem{Alkhateeb2014Channel}
A.~Alkhateeb \emph{et~al.}, ``{Channel estimation and hybrid precoding for
  millimeter wave cellular systems},'' \emph{{IEEE} J. Sel. Topics Signal
  Process.}, vol.~8, no.~5, pp. 831--846, Oct. 2014.

\bibitem{Park2017Spatial}
S.~Park and R.~W. Heath~Jr, ``{Spatial Channel Covariance Estimation for the
  Hybrid MIMO Architecture: A Compressive Sensing Based Approach},''
  \emph{arXiv preprint arXiv:1711.04207}, 2017.

\bibitem{Haghighatshoar2017Massive}
S.~Haghighatshoar and G.~Caire, ``{Massive MIMO channel subspace estimation
  from low-dimensional projections},'' \emph{{IEEE} Trans. Signal Process.},
  vol.~65, no.~2, pp. 303--318, 2017.

\bibitem{3p5GHz}
\BIBentryALTinterwordspacing
F.~C. Commission, ``{FCC makes 150 megahertz of contiguous spectrum available
  for mobile broadband and other uses through innovative sharing policies}.''
  [Online]. Available:
  \url{https://apps.fcc.gov/edocs\_public/attachmatch/DOC-333083A1.pdf}
\BIBentrySTDinterwordspacing

\bibitem{28GHz}
\BIBentryALTinterwordspacing
------, ``{LMDS (Local Multipoint Distribution Service) Band Allocation: 28 \&
  31 GHz Band Plan}.'' [Online]. Available:
  \url{http://wireless.fcc.gov/auctions/data/bandplans/lmds.pdf}
\BIBentrySTDinterwordspacing

\bibitem{Commission2012Amendment}
F.~C. Commission \emph{et~al.}, ``{Amendment of the commission’s rules with
  regard to commercial operations in the 3550-3650 MHz band},'' \emph{GN
  Docket}, no. 12-354, 2012.

\bibitem{Commission2016Fact}
\BIBentryALTinterwordspacing
------, ``{Fact sheet: spectrum frontiers rules identify, open up vast amounts
  of new high-band spectrum for next generation (5G) wireless broadband},''
  2016. [Online]. Available:
  \url{https://apps.fcc.gov/edocs\_public/attachmatch/DOC-340310A1.pdf}
\BIBentrySTDinterwordspacing

\bibitem{weiler2015simultaneous}
R.~J. Weiler \emph{et~al.}, ``{Simultaneous millimeter-wave multi-band channel
  sounding in an urban access scenario},'' in \emph{Proc. Eur. Conf. Antennas
  Propag. {(EuCAP)}}, Apr. 2015, pp. 1--5.

\bibitem{Poon2003Indoor}
A.~S. Poon and M.~Ho, ``{Indoor multiple-antenna channel characterization from
  2 to 8 GHz},'' in \emph{Proc. {IEEE} Int. Conf. Commun. {(ICC)}}, May 2003,
  pp. 3519--3523.

\bibitem{Jaeckel20165G}
S.~Jaeckel \emph{et~al.}, ``{5G channel models in mm-wave frequency bands},''
  in \emph{Proc. Eur. Wireless Conf. {(EW)}}, May 2016, pp. 1--6.

\bibitem{Kaya201628}
A.~{\"O}. Kaya, D.~Calin, and H.~Viswanathan. (2016, Apr.) {28 GHz and 3.5 GHz
  wireless channels: Fading, delay and angular dispersion}.

\bibitem{Vershynin2010Introduction}
R.~Vershynin, ``{Introduction to the non-asymptotic analysis of random
  matrices},'' \emph{arXiv preprint arXiv:1011.3027}, 2010.

\end{thebibliography}

\end{document}